\def\us{\char`\_}
\def\temper{\tau}
\def\set#1{\{\,#1\,\}}
\def\abs#1{{|\,#1\,|}}
\newtheorem{theorem}{Theorem}
\newtheorem{proposition}[theorem]{Proposition}
\newtheorem{lemma}[theorem]{Lemma}
\newtheorem{corollary}[theorem]{Corollary}
\def\rtilesys#1{\begin{picture}(14,20)\gasset{AHnb=0, Nw=0, Nh=0, Nframe=no}{#1}\end{picture}}
\def\rtilesyss#1{\begin{picture}(17,20)\gasset{AHnb=0, Nw=0, Nh=0, Nframe=no}{#1}\end{picture}}
\def\rtilew#1#2#3{\drawpolygon(8.5,0)(8.5,17)(0,8.5)\node()(10.5,8.5){#1}\node()(2,13){#2}\node()(2,4){#3}}
\def\rtilewddd#1#2#3{\rtilesys{\rtilew{#1}{#2}{#3}\drawpolygon(7.5,2.5)(7.5,14.5)(1.5,8.5)}}
\def\rtilewdss#1#2#3{\rtilesys{\rtilew{#1}{#2}{#3}\drawline(7.5,1.5)(7.5,15.5)}}
\def\rtilewdsd#1#2#3{\rtilesys{\rtilew{#1}{#2}{#3}\drawline(1,9)(7.5,2.5)(7.5,15.5)}}
\def\rtilee#1#2#3{\drawpolygon(2,17)(2,0)(10.5,8.5)\node()(0,8.5){#1}\node()(8.5,4){#2}\node()(8.5,13){#3}}
\def\rtileeddd#1#2#3{\rtilesys{\rtilee{#1}{#2}{#3}\drawpolygon(3,14.5)(3,2.5)(9,8.5)}}
\def\rtileedds#1#2#3{\rtilesys{\rtilee{#1}{#2}{#3}\drawline(3,15.5)(3,2.5)(9.5,9)}}
\def\rtileedss#1#2#3{\rtilesys{\rtilee{#1}{#2}{#3}\drawline(3,15.5)(3,1.5)}}
\def\rtileedsd#1#2#3{\rtilesys{\rtilee{#1}{#2}{#3}\drawline(9.5,8)(3,14.5)(3,1.5)}}
\def\rtilen#1#2#3{\drawpolygon(0,8.5)(17,8.5)(8.5,17)\node()(8.5,6.5){#1}\node()(14,14){#2}\node()(3,14){#3}}
\def\rtilenssd#1#2#3{\rtilesyss{\rtilen{#1}{#2}{#3}\drawline(9,16)(1.5,8.5)}}
\def\rtilensds#1#2#3{\rtilesyss{\rtilen{#1}{#2}{#3}\drawline(15,9)(8,16)}}
\def\rtilensdd#1#2#3{\rtilesyss{\rtilen{#1}{#2}{#3}\drawline(15,9)(8.5,15.5)(2,9)}}
\def\rtilendds#1#2#3{\rtilesyss{\rtilen{#1}{#2}{#3}\drawline(1,9.5)(14.5,9.5)(8,16)}}
\def\rtilendss#1#2#3{\rtilesyss{\rtilen{#1}{#2}{#3}\drawline(1,9.5)(16,9.5)}}
\def\rtiles#1#2#3{\drawpolygon(17,8.5)(0,8.5)(8.5,0)\node()(8.5,10.5){#1}\node()(3,3){#2}\node()(14,3){#3}}
\def\rtilesddd#1#2#3{\rtilesyss{\rtiles{#1}{#2}{#3}\drawpolygon(14.5,7.5)(2.5,7.5)(8.5,1.5)}}
\def\rtilesdds#1#2#3{\rtilesyss{\rtiles{#1}{#2}{#3}\drawline(16,7.5)(2.5,7.5)(9,1)}}
\def\rtilesdss#1#2#3{\rtilesyss{\rtiles{#1}{#2}{#3}\drawline(16,7.5)(1,7.5)}}
\title{Triangular Self-Assembly}
\author{Lila Kari \and Shinnosuke Seki \and Zhi Xu}
\date{The University of Western Ontario, \\
Department of Computer Science, \\
Middlesex College, \\
London, Ontario, Canada N6A 5B7 \\
{\tt\{lila, sseki, zhi\us xu\}@csd.uwo.ca} \\
\medskip
\today}
\begin{document}
\maketitle

\begin{abstract}
We discuss the self-assembly system of triangular tiles instead of
square tiles, in particular right triangular tiles and equilateral
triangular tiles. We show that the triangular tile assembly system,
either deterministic or non-deterministic, has the same power to the
square tile assembly system in computation, which is Turing
universal. By providing counter-examples, we show that the
triangular tile assembly system and the square tile assembly system
are not comparable in general. More precisely, there exists square
tile assembly system $S$ such that no triangular tile assembly
system is a division of $S$ and produces the same shape; there
exists triangular tile assembly system $T$ such that no square tile
assembly system produces the same compatible shape with border
glues. We also discuss the assembly of triangles by triangular tiles
and obtain results similar to the assembly of squares, that is to
assemble a triangular of size $O(N^2)$, the minimal number of tiles
required is in $O(\log N/\log\log N)$.
\end{abstract}

\section{Introduction}
In the nature, molecules tend to interact to form more complicated
structures of crystals and supramolecules. The spontaneous
construction of particular molecular structures is one important
topic in DNA and molecular computing, which is based on the
Watson-Crick complementarity between pairs of DNA strands.
Generally, the process is composed of two steps. First, the basic
building blocks are carefully designed and constructed by synthetic
chemistry; and then the aimed large structure is assembled by
sticking basic blocks together through Watson-Crick complementarity.
In 1996, Winfree~\cite{Winfree1996} showed how the formation of
large structures from certain DNA molecules can simulate the Blocked
Cellular Automata (BCA), which is of the same computational power to
the Turing machines. In 1998, Winfree, Liu, Wenzler, and
Seeman~\cite{Winfree&Liu&Wenzler&Seeman1998} designed and produced
two-dimensional DNA crystals in their laboratory by the method of
self-assembly.

One systematic study on this topic is the self-assembly of squares.
In 1999, Adleman~\cite{Adleman1999} proposed models of
self-assembly, which are based on the theory of Wang
tiles~\cite{Wang1961}, and studied the time complexity of linear
polymerization via ``step counting''. In 2000, Rothemund and
Winfree~\cite{Rothemund&Winfree2000} showed that to
deterministically self-assemble an $N\times N$ full square, $N^2$
tiles is required for temperature $\temper=1$ and the number of
tiles for the case of fixed temperature $\temper\geq2$ is $O(\log
N)$. In 2001, Adleman, Cheng, Goel, and
Huang~\cite{Adleman&Cheng&Goel&Huang2001} showed that $\Theta(\log
N/\log\log N)$ tiles is enough for fixed temperature $\temper\geq
2$. In 2006, Kao and Schweller~\cite{Kao&Schewller2006} showed that
if the temperature $\temper$ is allowed to change systematically,
then a constant number of tiles is enough for the self-assembly of
arbitrary $N\times N$ full square with temperature sequence of
length $O(\log N)$.

One variation on the self-assembly of squares is that we study tile
of shapes other than squares that can tile a full two-dimensional
plane; and instead of considering full squares, we discuss the
self-assembly of other particular full two-dimensional region. For
tiling a full two-dimensional plane with one single shape of regular
polygons, the only possible choice of regular polygons are
equilateral triangles, squares, and hexagons. In this paper, we
discuss the self-assembly of triangles and other shapes by
triangular tiles, more specifically, of shape of equilateral
triangles and of right triangles, respectively.

In Section~2, we will introduce the definition of triangular tile
assembly system. In Section~3, we discuss the computational power of
the triangular tile assembly system, and show that it is Turing
universal. In Section~4, we compare the square tile assembly system
and triangular tile assembly system in the aspect of shape
complexity and show that the two types of system are not comparable.
In Section~5, we discuss the assembly of triangles. In the last
section, we summarize the obtained results.

\section{Definitions}
All discussion in this paper is on a two-dimensional plane. Before
we discuss the right triangular tiles and the equilateral triangular
tiles respectively, we first give a uniform definition of the
\emph{Tile Assembly Model} (TAM).

Similar to the square tiles, we define a \emph{triangular tile} to
be an triangle of particular shape (right triangle or equilateral
triangle) with each side being colored from the set $\Sigma$ of
``glues''. Without loss of generality, we assume that the shortest
side of a triangular tile is of unit length. We also assume that a
triangular tile cannot be rotated nor flipped over. Both square
tiles and triangular tiles are called tiles.

The particular non-interactive glue is denoted by $\phi$ and we
always assume $\phi\in\Sigma$. The \emph{temperature} $\temper$ is a
real number, which presents under which the assembly is proceeded,
and the set of all valid temperature is denoted by $\mathcal{W}$. A
\emph{strength function} $g:\Sigma\times\Sigma\to\mathcal{W}$ is
defined such that $g(\gamma,\gamma')=g(\gamma',\gamma)$ and
$g(\phi,\gamma)=0$. In particular, we are interested in the discrete
case $\temper\in\mathcal{N}$, $\Sigma=\Gamma\times\mathcal{N}$ and
$g((a,n),(a',n'))=n$ if $a=a',n=n'$ otherwise $g((a,n),(a',n'))=0$,
where $\mathcal{N}=0,1,\ldots$ are non-negative integers.

We say two tiles can \emph{stick} together if they can be physically
put adjacent by the sides $\gamma$ and $\gamma'$ of the same length
such that $g(\gamma,\gamma')\geq\temper$. A tile can stick to a set
of tiles if they can be physically put adjacent by the sides
$\gamma_i$ and $\gamma'_i$ of the same length such that $\sum_i
g(\gamma_i,\gamma'_i)\geq\temper$. A \emph{super-tile} is a set of
tiles that stick to each other such that no two tiles overlap and
for any two tiles there is a path of sticked edges between them.
We also call a single tile super-tile.

A \emph{tile assembly system} is a tuple $S=(T,s,g,\temper)$, where
$T$ is a finite set of tiles, $s\in T$ is a particular super-tile
called \emph{seed}, $g$ is a strength function, and $\temper$ is the
temperature. The produce of a tile assembly system is a super-tile
$st$ such that there is a super-tile sequence
$s=st_0,st_1,st_2,\ldots,st=st_n$, where $st_{i+1}$ is obtained by
stick one tile in $T$ to $st_i$ under temperature $\temper$ and no
tile in $T$ can be stick to $st_n$ to obtain a bigger super-tile. A
tile assembly system is \emph{deterministic} if its produce is
unique regardless of the different choice of super-tile at each
step.

In analogy to the self-assembly of full square, in what follows we
will study the self-assembly of full triangles and other full shapes
by right triangular tiles and equilateral triangular tiles,
respectively. Here ``full'' means the pair of common edges of every
two adjacent tiles in the produce has a positive strength.

\emph{Right triangular tiles} are triangular tiles of the shape of
right triangles with the right angle point to four possible
directions as illustrated in Figure~\ref{fig:ttile}. More formally,
a right triangular tile is described by
$(\gamma_1,\gamma_2,\gamma_3,k)$, where $\gamma_i\in\Sigma$ are
glues on sides of the tile in the counter-clockwise order starting
from the longest side and $k\in\set{\tt e,n,w,s}$ presents the
direction of the right angle. \emph{Equilateral triangular tiles}
are triangular tiles of the shape of equilateral triangles that are
either in an upward position or in a downward position as
illustrated in Figure~\ref{fig:ttile}. More formally, it is
described by $(\gamma_1,\gamma_2,\gamma_3,k)$, where $k\in\set{\tt
u,d}$ presents the two positions and $\gamma_i\in\Sigma$ are glues
on sides of the tile in the counter-clockwise order starting from
the horizontal side.

\begin{figure}
\centering
  \begin{picture}(80,10)
  \gasset{AHnb=0, Nw=0, Nh=0, Nframe=no}
  \drawpolygon(0,17)(0,0)(8.5,8.5)\node()(0,8.5){$\gamma_1$}\node()(4.25,4.25){$\gamma_2$}\node()(4.25,12.75){$\gamma_3$}
  \drawpolygon(10,8.5)(27,8.5)(18.5,17)\node()(18.5,8.5){$\gamma_1$}\node()(22.75,12.75){$\gamma_2$}\node()(14.25,12.75){$\gamma_3$}
  \drawpolygon(38.5,0)(38.5,17)(30,8.5)\node()(38.5,8.5){$\gamma_1$}\node()(34.25,12.75){$\gamma_2$}\node()(34.25,4.25){$\gamma_3$}
  \drawpolygon(57,8.5)(40,8.5)(48.5,0)\node()(48.5,8.5){$\gamma_1$}\node()(44.25,4.25){$\gamma_2$}\node()(52.75,4.75){$\gamma_3$}

  \drawpolygon(55,0)(67,0)(61,10)\node()(61,0){$\gamma_1$}\node()(64,5){$\gamma_2$}\node()(58,5){$\gamma_3$}
  \drawpolygon(65,10)(71,0)(77,10)\node()(71,10){$\gamma_1$}\node()(68,5){$\gamma_2$}\node()(74,5){$\gamma_3$}
  \end{picture}
  \caption{Four right triangular tiles and two equilateral triangular tiles}\label{fig:ttile}
\end{figure}
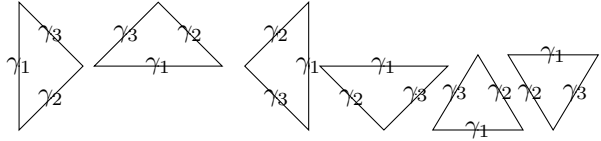

    \section{Computational Complexity}

Tiling a plane is equivalent to attaching tiles onto a lattice of a coordinate system on the plane.
The choice of coordinate system is arbitrary, but square tile systems are to choose the rectangular coordinate system $C_R$.
In contrast, the oblique coordinate system is rather natural as a pasted board of triangular tiles.
It seems reasonable to say that the oblique coordinate system $C_{\pi/3}$ whose two axes intersect with $\pi/3$ is the best choice for equilateral triangular tiles.
The right triangular tile accords with both rectangular and oblique coordinate systems because it can tessellate unlike equilateral triangular tiles.
The conversion among these coordinate systems can be done by affine transformations.

\begin{figure}
\begin{center}
\begin{tikzpicture}
    \draw (0,1.2) -- node {$N_i$} (1.2,1.2) -- node {$E_i$} (1.2,0) -- node {$S_i$} (0,0) -- node {$W_i$} (0,1.2);

    \draw[very thick,->] (2,0.6) -- (2.5,0.6);

    \draw (3,0) -- node {$S_i$} (4.2,0) -- node {$\gamma_i$} (3.6,1) -- node {$W_i$} (3,0);
    \draw (3.8,1.2) -- node {$\gamma_i$} (4.4,0.2) -- node {$E_i$} (5,1.2) -- node {$N_i$} (3.8,1.2);
\end{tikzpicture}
\end{center}
\caption{Conversion of a Wang system with square tiles into an equivalenct Wang system with equilateral triangular tiles}
\label{fig:square_equi_simulation}
\end{figure}
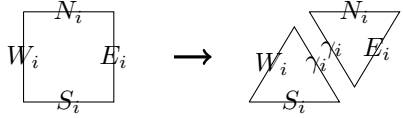

As implied in Figure~\ref{fig:square_equi_simulation}, so far as Wang tile system is concerned, whether the tile shape is square, equilateral triangle, or right triangle does not matter because the Wang tile system does not have the notion of growth by time or temperature, and imposes that any abutting edges have to have the same glue.
Several problems on the computational complexity of Wang tile system was studied by Robinson in 1971 \cite{Robinson1971}.
Among them, one important problem is the tiling full plane problem: given a Wang tile system, decide whether any product of that system is not a full plane.
The argument so far should make it clear that we can obtain analogous results for Wang system with triangular tiles.
For example, tiling full plane problem is undecidable for a given Wang system with (equilateral, right) triangular tiles.

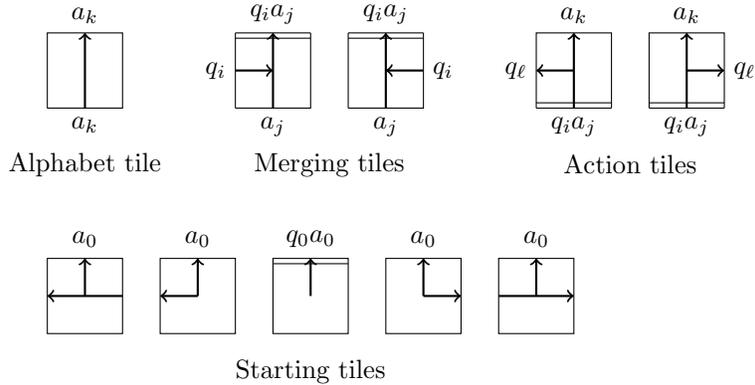
\begin{figure}
\begin{center}
\begin{tikzpicture}
    \draw (0,0) -- node[below] {$a_k$} (1,0) -- (1,1) -- node[above] {$a_k$} (0,1) -- (0,0);
    \draw [thick,->] (0.5,0) -- (0.5,1);

    \node at (0.5,-0.75) {Alphabet tile};

    \draw (2.5,0) -- node[below] {$a_j$} (3.5,0) -- (3.5,1);
    \draw (3.5,1) -- node[above] {$q_i a_j$} (2.5,1) [yshift=-2pt] (3.5,1) -- (2.5,1);
    \draw (2.5,1) -- node[left] {$q_i$} (2.5,0);
    \draw [thick,->] (3,0) -- (3,1);
    \draw [thick,->] (2.5,0.5) -- (3,0.5);

    \draw (4,0) -- node[below] {$a_j$} (5,0) -- node[right] {$q_i$} (5,1);
    \draw (5,1) -- node[above] {$q_i a_j$} (4,1) [yshift=-2pt] (5,1) -- (4,1);
    \draw (4,1) -- (4,0);
    \draw [thick,->] (4.5,0) -- (4.5,1);
    \draw [thick,->] (5,0.5) -- (4.5,0.5);

    \node at (3.75,-0.75) {Merging tiles};

    \draw (6.5,0) -- node[below] {$q_i a_j$} (7.5,0) [yshift=2pt] (6.5,0) -- (7.5,0);
    \draw (7.5,0) -- (7.5,1) -- node[above] {$a_k$} (6.5,1) -- node[left] {$q_\ell$} (6.5,0);
    \draw [thick,->] (7,0) -- (7,1);
    \draw [thick,->] (7,0.5) -- (6.5,0.5);

    \draw (8,0) -- node[below] {$q_i a_j$} (9,0) [yshift=2pt] (8,0) -- (9,0);
    \draw (9,0) -- node[right] {$q_\ell$} (9,1) -- node[above] {$a_k$} (8,1) -- (8,0);
    \draw [thick,->] (8.5,0) -- (8.5,1);
    \draw [thick,->] (8.5,0.5) -- (9,0.5);

    \node at (7.75,-0.75) {Action tiles};

    \draw (0,-2) rectangle node[above=15pt] {$a_0$} (1,-3);
    \draw (1.5,-2) rectangle node[above=15pt] {$a_0$} (2.5,-3);
    \draw (3,-2) rectangle node[above=15pt] {$q_0 a_0$} (4,-3);
    \draw (4.5,-2) rectangle node[above=15pt] {$a_0$} (5.5,-3);
    \draw (6,-2) rectangle node[above=15pt] {$a_0$} (7,-3);
    \draw [yshift=-2pt] (3,-2) -- (4,-2);

    \draw [thick,->] (1,-2.5) -- (0,-2.5);
    \draw [thick,->] (0.5,-2.5) -- (0.5,-2);
    \draw [thick,->] (2,-2.5) -- (1.5,-2.5);
    \draw [thick,->] (2,-2.5) -- (2,-2);
    \draw [thick,->] (3.5,-2.5) -- (3.5,-2);
    \draw [thick,->] (5,-2.5) -- (5.5,-2.5);
    \draw [thick,->] (5,-2.5) -- (5,-2);
    \draw [thick,->] (6,-2.5) -- (7,-2.5);
    \draw [thick,->] (6.5,-2.5) -- (6.5,-2);

    \node at (3.5,-3.5) {Starting tiles};
\end{tikzpicture}
\end{center}
\caption{
    Tiles of a square TAM which simulates a given Turing Machine at temperature $\tau = 2$.
    A tile can stick to its neighbor via a single-lined edge with a glue strength 1 or via double-lined one with strength 2, but the glue works only when the abutting edges share the same label and the directions of their arrow heads (if any) must match (head with tail).
}
\label{fig:square_TM}
\end{figure}


This conversion may still work for non-deterministic tile assembly models, but it does not work any more for deterministic ones.
Let us verify this statement by trying to simulate a Turing machine by the triangular TAM thus obtained.
Based on the conversion, the tile in Figure~\ref{fig:square_TM} which merges the state $q_i$ from the right to the letter $a_j$ is split into $(a_j, \gamma, \phi, \mathtt{u})$ and $(q_i a_j, \gamma, q_i, \mathtt{d})$.
What is important is that two inputs of the square merging tile $a_j$, $q_i$ are now separated onto the two triangular tiles, and cannot cooperate until one of the tiles is stuck to the super-tile.

This failure means that the conversion requires some modification for the deterministic triangular TAM construction.
In the following, we will prove that the triangular tile assembly system is Turning universal in the sense that the tiling full plane problem can simulate the halting problem.
Throughout these proofs, it will be elucidated what modification to be required.

\begin{theorem}\label{thm:equitria_TM}
    The deterministic equilateral triangular tile assembly system is Turing universal at temperature $\tau \ge 2$.
\end{theorem}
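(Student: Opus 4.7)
The plan is to construct a deterministic equilateral triangular tile assembly system that simulates an arbitrary Turing machine $M$ at temperature $\tau=2$, adapting the square TAM of Figure~\ref{fig:square_TM} in a way that sidesteps the obstruction flagged above. As a preprocessing step I would first replace $M$ with an equivalent Turing machine $M'$ whose head at every step moves in a single fixed direction, say right; this is a standard reduction (store the tape on two tracks, one for each side of the head) that preserves Turing completeness. The advantage is that in the square TAM simulating $M'$, every merging tile now receives its state input from the same compass direction, namely $W$, so a single geometric choice in the conversion will cover every merge we ever need.

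Next I would set up the triangular encoding. Each tape cell at a given time step becomes a rhombus consisting of one down-triangle below and one up-triangle above, meeting along a horizontal internal edge. Fix the bijection $N\mapsto UL$, $E\mapsto UR$, $S\mapsto DR$, $W\mapsto DL$ between the four compass sides of the original square tile and the four external slanted sides of the rhombus (where $UL, UR$ are the slanted sides of the up-triangle and $DL, DR$ those of the down-triangle). Under this bijection, the two strength-$1$ inputs of every merging tile of $M'$ both land on the down-triangle: the letter $a_j$ on $DR$ (from $S$) and the state $q_i$ on $DL$ (from $W$). The down-triangle therefore attaches by exactly the same strength-$1$ cooperation as the original square merging tile, and in doing so it commits a strength-$2$ internal glue on its horizontal top edge; the up-triangle then attaches via that strength-$2$ glue and carries the merged output $q_ia_j$ out through its $UL=N$ edge. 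For alphabet, action, and starting tiles the split is analogous and routine: the triangle that inherits the principal strength-$2$ external input attaches first, and its partner is pulled into place by a tile-type-specific strength-$2$ glue on the internal horizontal edge. (In the case of alphabet tiles, ordering the horizontal cooperation so that each tile cooperates with its already-placed $W$ neighbor keeps all cooperations on the down-triangle as well.)

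The main obstacle is verifying determinism. For each tile type I would have to check that no unintended triangle can complete a legal strength-$2$ attachment using the new internal horizontal glues or the split external glues; the internal glue set must be made rich enough that the wrong pair of triangles never fits together, and the temporal order \emph{down-triangle before up-triangle} must be forced at every step. This is mostly a bookkeeping argument by induction on the number of placed triangles, but the combinatorial check must be done carefully across the alphabet, merging, action, and starting tile classes, since spurious cooperations across adjacent rhombuses are exactly the kind of subtle failure modes the naive conversion fell into. Once determinism and step-by-step correctness are established, the produce of the triangular TAM is infinite iff $M'$, and hence $M$, does not halt; this reduces the halting problem to the tiling-plane problem for the constructed triangular TAM and establishes that the deterministic equilateral triangular tile assembly system is Turing universal at $\tau\geq 2$.
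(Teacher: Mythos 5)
Your construction for the right\nobreakdash-moving case is essentially the paper's mechanism: split each cell into a down/up pair, let the down-triangle receive both strength-$1$ inputs (letter from below, state from the left) so that it attaches by the same cooperation as the square merging tile, and let a strength-$2$ internal glue on the horizontal edge pull in the up-triangle. The fatal problem is the preprocessing step you use to make this single case suffice. A Turing machine whose head moves right at \emph{every} step visits each tape cell exactly once; whatever it writes is never re-read, so once the head has passed the input the machine is an autonomous finite-state process on blanks, and its halting problem is decidable. No such machine is Turing universal, and the ``two-track'' trick you invoke is the standard reduction from a two-way-infinite tape to a one-way-infinite tape --- it relocates cells, but the simulating head still moves in both directions. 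So the claimed machine $M'$ does not exist, and your reduction collapses.

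Without that preprocessing, left-moving transitions must be handled, and under your fixed bijection ($W\mapsto DL$, $S\mapsto DR$, $E\mapsto UR$, $N\mapsto UL$) a merge whose state arrives from the east places the state on the up-triangle and the letter on the down-triangle --- exactly the input-split obstruction you set out to avoid. The paper's proof resolves this not by normalizing the head's direction but by tagging each letter with $L$ or $R$ according to whether it lies to the left or right of the head, making the vertical glues on the head's side and behind it strength $2$ (so those columns grow unconditionally) while the glues ahead of the propagation front have strength $1$; it then supplies \emph{separate} action and merging tiles for left moves and for right moves, oriented so that in each case the two strength-$1$ inputs land on the same triangle. The repair to your argument is therefore to drop the preprocessing and add a mirrored family of merging/action tiles for leftward transitions (with the $L$/$R$ tags updated across the affected cells), rather than to rely on a direction-normalized machine.
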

\begin{proof}
    We simulate a given deterministic Turing machine $M = (Q, \Sigma, \Gamma, \delta, q_0, \mathtt{B}, F)$ by a deterministic equilateral triangular TAM whose tile set is shown in Figure~\ref{fig:detETTAM_TM}.
    Without loss of generality, we can assume that $M$ always moves its head when it transits.

    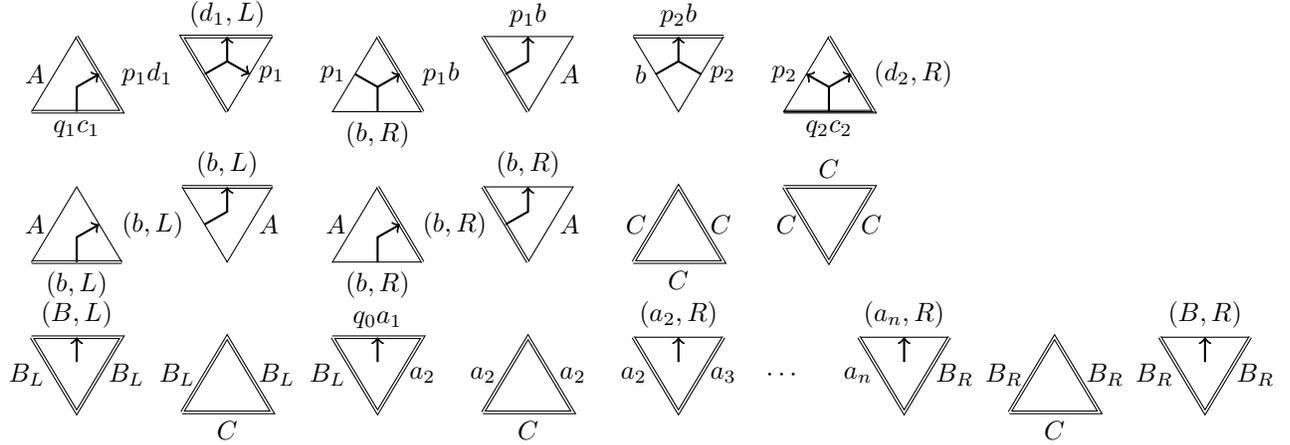
\begin{figure}
    \begin{tikzpicture}

    \path (0,-1) coordinate (BL1R);
    \path (BL1R)+(1.2,0) coordinate (BL2R);
    \path (BL1R)+(0.6,-1) coordinate (BL3R);
    \path (BL1R)+(0.6,-0.33) coordinate (BLcR);
    \path (BL1R)+(0.6,0) coordinate (BL12R);
    \path (BL1R)+(0.9,-0.5) coordinate (BL23R);
    \path (BL1R)+(0.3,-0.5) coordinate (BL31R);

    \draw [double] (BL1R) -- node[above] {$(B,L)$} (BL2R) -- node[right] {$B_L$} (BL3R) -- node[left] {$B_L$} (BL1R);
    \draw [thick,->] (BLcR) -- (BL12R);

    \path (2,-2) coordinate (BL1);
    \path (BL1)+(1.2,0) coordinate (BL2);
    \path (BL1)+(0.6,1) coordinate (BL3);
    \path (BL1)+(0.6,0.33) coordinate (BLc);
    \path (BL1)+(0.6,0) coordinate (BL12);
    \path (BL1)+(0.9,0.5) coordinate (BL23);
    \path (BL1)+(0.3,0.5) coordinate (BL31);

    \draw [double] (BL1) -- node[below] {$C$} (BL2) -- node[right] {$B_L$} (BL3) -- node[left] {$B_L$} (BL1);

    \path (4,-1) coordinate (Seed1R);
    \path (Seed1R)+(1.2,0) coordinate (Seed2R);
    \path (Seed1R)+(0.6,-1) coordinate (Seed3R);
    \path (Seed1R)+(0.6,-0.33) coordinate (SeedcR);
    \path (Seed1R)+(0.6,0) coordinate (Seed12R);
    \path (Seed1R)+(0.9,-0.5) coordinate (Seed23R);
    \path (Seed1R)+(0.3,-0.5) coordinate (Seed31R);

    \draw [double] (Seed1R) -- node[above] {$q_0 a_1$} (Seed2R) -- node[right] {$a_2$} (Seed3R) -- node[left] {$B_L$} (Seed1R);
    \draw [thick,->] (SeedcR) -- (Seed12R);

    \path (6,-2) coordinate (Sa21);
    \path (Sa21)+(1.2,0) coordinate (Sa22);
    \path (Sa21)+(0.6,1) coordinate (Sa23);
    \path (Sa21)+(0.6,0.33) coordinate (Sa2c);
    \path (Sa21)+(0.6,0) coordinate (Sa212);
    \path (Sa21)+(0.9,0.5) coordinate (Sa223);
    \path (Sa21)+(0.3,0.5) coordinate (Sa231);

    \draw [double] (Sa21) -- node[below] {$C$} (Sa22) -- node[right] {$a_2$} (Sa23) -- node[left] {$a_2$} (Sa21);

    \path (8,-1) coordinate (Sa21R);
    \path (Sa21R)+(1.2,0) coordinate (Sa22R);
    \path (Sa21R)+(0.6,-1) coordinate (Sa23R);
    \path (Sa21R)+(0.6,-0.33) coordinate (Sa2cR);
    \path (Sa21R)+(0.6,0) coordinate (Sa212R);
    \path (Sa21R)+(0.9,-0.5) coordinate (Sa223R);
    \path (Sa21R)+(0.3,-0.5) coordinate (Sa231R);

    \draw (Sa21R) -- node[above] {$(a_2,R)$} (Sa22R); \draw [double] (Sa22R) -- node[right] {$a_3$} (Sa23R) -- node[left] {$a_2$} (Sa21R);
    \draw [thick,->] (Sa2cR) -- (Sa212R);

    \node at (10,-1.5) {$\cdots$};

    \path (11,-1) coordinate (San1R);
    \path (San1R)+(1.2,0) coordinate (San2R);
    \path (San1R)+(0.6,-1) coordinate (San3R);
    \path (San1R)+(0.6,-0.33) coordinate (SancR);
    \path (San1R)+(0.6,0) coordinate (San12R);
    \path (San1R)+(0.9,-0.5) coordinate (San23R);
    \path (San1R)+(0.3,-0.5) coordinate (San31R);

    \draw (San1R) -- node[above] {$(a_n,R)$} (San2R); \draw [double] (San2R) -- node[right] {$B_R$} (San3R) -- node[left] {$a_n$} (San1R);
    \draw [thick,->] (SancR) -- (San12R);

    \path (13,-2) coordinate (BR1);
    \path (BR1)+(1.2,0) coordinate (BR2);
    \path (BR1)+(0.6,1) coordinate (BR3);
    \path (BR1)+(0.6,0.33) coordinate (BRc);
    \path (BR1)+(0.6,0) coordinate (BR12);
    \path (BR1)+(0.9,0.5) coordinate (BR23);
    \path (BR1)+(0.3,0.5) coordinate (BR31);

    \draw [double] (BR1) -- node[below] {$C$} (BR2) -- node[right] {$B_R$} (BR3) -- node[left] {$B_R$} (BR1);

    \path (15,-1) coordinate (BR1R);
    \path (BR1R)+(1.2,0) coordinate (BR2R);
    \path (BR1R)+(0.6,-1) coordinate (BR3R);
    \path (BR1R)+(0.6,-0.33) coordinate (BRcR);
    \path (BR1R)+(0.6,0) coordinate (BR12R);
    \path (BR1R)+(0.9,-0.5) coordinate (BR23R);
    \path (BR1R)+(0.3,-0.5) coordinate (BR31R);

    \draw (BR1R) -- node[above] {$(B,R)$} (BR2R); \draw [double] (BR2R) -- node[right] {$B_R$} (BR3R) -- node[left] {$B_R$} (BR1R);
    \draw [thick,->] (BRcR) -- (BR12R);

    \path (0,2) coordinate (C1);
    \path (C1)+(1.2,0) coordinate (C2);
    \path (C1)+(0.6,1) coordinate (C3);
    \path (C1)+(0.6,0.33) coordinate (Cc);
    \path (C1)+(0.6,0) coordinate (C12);
    \path (C1)+(0.9,0.5) coordinate (C23);
    \path (C1)+(0.3,0.5) coordinate (C31);

    \draw [double] (C1) -- node[below] {$q_1 c_1$} (C2) -- node[right=5pt] {$p_1 d_1$} (C3);
    \draw (C3) -- node[left] {$A$} (C1);
    \draw [thick,->] (C12) -- (Cc) -- (C23);

    \path (2,3) coordinate (C1R);
    \path (C1R)+(1.2,0) coordinate (C2R);
    \path (C1R)+(0.6,-1) coordinate (C3R);
    \path (C1R)+(0.6,-0.33) coordinate (CcR);
    \path (C1R)+(0.6,0) coordinate (C12R);
    \path (C1R)+(0.9,-0.5) coordinate (C23R);
    \path (C1R)+(0.3,-0.5) coordinate (C31R);

    \draw [double] (C3R) -- (C1R) -- node[above] {$(d_1,L)$} (C2R);
    \draw (C2R) -- node[right] {$p_1$} (C3R);

    \draw [thick,->] (C31R) -- (CcR) -- (C12R); \draw [thick,->] (CcR) -- (C23R);

    \path (4,2) coordinate (D1);
    \path (D1)+(1.2,0) coordinate (D2);
    \path (D1)+(0.6,1) coordinate (D3);
    \path (D1)+(0.6,0.33) coordinate (Dc);
    \path (D1)+(0.6,0) coordinate (D12);
    \path (D1)+(0.9,0.5) coordinate (D23);
    \path (D1)+(0.3,0.5) coordinate (D31);

    \draw (D3) -- node[left] {$p_1$} (D1) -- node[below] {$(b,R)$} (D2);
    \draw [double] (D2) -- node[right=5pt] {$p_1 b$} (D3);

    \draw [thick,->] (D31) -- (Dc) -- (D23); \draw [thick] (D12) -- (Dc);

    \path (6,3) coordinate (D1R);
    \path (D1R)+(1.2,0) coordinate (D2R);
    \path (D1R)+(0.6,-1) coordinate (D3R);
    \path (D1R)+(0.6,-0.33) coordinate (DcR);
    \path (D1R)+(0.6,0) coordinate (D12R);
    \path (D1R)+(0.9,-0.5) coordinate (D23R);
    \path (D1R)+(0.3,-0.5) coordinate (D31R);

    \draw (D1R) -- node[above] {$p_1 b$} (D2R) -- node[right] {$A$} (D3R);
    \draw [double] (D3R) -- (D1R);

    \draw [thick,->] (D31R) -- (DcR) -- (D12R);

    \path (8,3) coordinate (E1R);
    \path (E1R)+(1.2,0) coordinate (E2R);
    \path (E1R)+(0.6,-1) coordinate (E3R);
    \path (E1R)+(0.6,-0.33) coordinate (EcR);
    \path (E1R)+(0.6,0) coordinate (E12R);
    \path (E1R)+(0.9,-0.5) coordinate (E23R);
    \path (E1R)+(0.3,-0.5) coordinate (E31R);

    \draw [double] (E1R) -- node[above] {$p_2 b$} (E2R);
    \draw (E2R) -- node[right] {$p_2$} (E3R) -- node[left] {$b$} (E1R);

    \draw [thick,->] (E31R) -- (EcR) -- (E12R); \draw [thick] (E23R) -- (EcR);

    \path (10,2) coordinate (E1);
    \path (E1)+(1.2,0) coordinate (E2);
    \path (E1)+(0.6,1) coordinate (E3);
    \path (E1)+(0.6,0.33) coordinate (Ec);
    \path (E1)+(0.6,0) coordinate (E12);
    \path (E1)+(0.9,0.5) coordinate (E23);
    \path (E1)+(0.3,0.5) coordinate (E31);

    \draw [double] (E1) -- node[below] {$q_2 c_2$} (E2) -- node[right=5pt] {$(d_2,R)$} (E3);
    \draw (E3) -- node[left] {$p_2$} (E1) -- (E2);

    \draw [thick,->] (E12) -- (Ec) -- (E23); \draw [thick,->] (Ec) -- (E31);

    \path (0,0) coordinate (A1);
    \path (A1)+(1.2,0) coordinate (A2);
    \path (A1)+(0.6,1) coordinate (A3);
    \path (A1)+(0.6,0.33) coordinate (Ac);
    \path (A1)+(0.6,0) coordinate (A12);
    \path (A1)+(0.9,0.5) coordinate (A23);
    \path (A1)+(0.3,0.5) coordinate (A31);

    \draw [double] (A1) -- node[below] {$(b,L)$} (A2);
    \draw (A2) -- node[right=5pt] {$(b,L)$} (A3) -- node[left] {$A$} (A1);

    \draw [thick,->] (A12) -- (Ac) -- (A23);

    \path (2,1) coordinate (A1R);
    \path (A1R)+(1.2,0) coordinate (A2R);
    \path (A1R)+(0.6,-1) coordinate (A3R);
    \path (A1R)+(0.6,-0.33) coordinate (AcR);
    \path (A1R)+(0.6,0) coordinate (A12R);
    \path (A1R)+(0.9,-0.5) coordinate (A23R);
    \path (A1R)+(0.3,-0.5) coordinate (A31R);

    \draw [double] (A1R) -- node[above] {$(b,L)$} (A2R); \draw (A2R) -- node[right] {$A$} (A3R) -- (A1R);

    \draw [thick,->] (A31R) -- (AcR) -- (A12R);

    \path (4,0) coordinate (B1);
    \path (B1)+(1.2,0) coordinate (B2);
    \path (B1)+(0.6,1) coordinate (B3);
    \path (B1)+(0.6,0.33) coordinate (Bc);
    \path (B1)+(0.6,0) coordinate (B12);
    \path (B1)+(0.9,0.5) coordinate (B23);
    \path (B1)+(0.3,0.5) coordinate (B31);

    \draw (B3) -- node[left] {$A$} (B1) -- node[below] {$(b,R)$} (B2);
    \draw [double] (B2) -- node[right=5pt] {$(b,R)$} (B3);

    \draw [thick,->] (B12) -- (Bc) -- (B23);

    \path (6,1) coordinate (B1R);
    \path (B1R)+(1.2,0) coordinate (B2R);
    \path (B1R)+(0.6,-1) coordinate (B3R);
    \path (B1R)+(0.6,-0.33) coordinate (BcR);
    \path (B1R)+(0.6,0) coordinate (B12R);
    \path (B1R)+(0.9,-0.5) coordinate (B23R);
    \path (B1R)+(0.3,-0.5) coordinate (B31R);

    \draw (B1R) -- node[above] {$(b,R)$} (B2R) -- node[right] {$A$} (B3R);
    \draw [double] (B3R) -- (B1R);

    \draw [thick,->] (B31R) -- (BcR) -- (B12R);

    \path (8,0) coordinate (Bottom1);
    \path (Bottom1)+(1.2,0) coordinate (Bottom2);
    \path (Bottom1)+(0.6,1) coordinate (Bottom3);
    \path (Bottom1)+(0.6,0.33) coordinate (Bottomc);
    \path (Bottom1)+(0.6,0) coordinate (Bottom12);
    \path (Bottom1)+(0.9,0.5) coordinate (Bottom23);
    \path (Bottom1)+(0.3,0.5) coordinate (Bottom31);

    \draw [double] (Bottom1) -- node[below] {$C$} (Bottom2) -- node[right] {$C$} (Bottom3) -- node[left] {$C$} (Bottom1);

    \path (10,1) coordinate (Bottom1R);
    \path (Bottom1R)+(1.2,0) coordinate (Bottom2R);
    \path (Bottom1R)+(0.6,-1) coordinate (Bottom3R);
    \path (Bottom1R)+(0.6,-0.33) coordinate (BottomcR);
    \path (Bottom1R)+(0.6,0) coordinate (Bottom12R);
    \path (Bottom1R)+(0.9,-0.5) coordinate (Bottom23R);
    \path (Bottom1R)+(0.3,-0.5) coordinate (Bottom31R);

    \draw [double] (Bottom1R) -- node[above] {$C$} (Bottom2R) -- node[right] {$C$} (Bottom3R) -- node[left] {$C$} (Bottom1R);

    \end{tikzpicture}
    \caption{
    Equilateral triangular tiles for the simulation of Turing machines, where $a_1 \cdots a_n$ is the input, $b \in \Sigma$, $\delta(q_1,c_1) = (p_1,d_1,\mathtt{L})$, and $\delta(q_2,c_2) = (p_2,d_2,\mathtt{R})$.
    (top) action tiles and merging tiles;
    (middle) alphabet tiles for letters to the left of TM head (indicated by $L$) and for letters to the right ($R$).
        The two tiles with the label $C$ are used to fill the third and fourth quadrants;
    (bottom) tiles for initialization with the third tile as seed;
    }
    \label{fig:detETTAM_TM}
    \end{figure}

    \begin{tikzpicture}


    \draw [double] (-3,1) -- node[above] {$(B,L)$} (-1.8,1) -- node[above] {$(B,L)$} (-0.6,1) -- node[above] {$q_0 a_1$} (0.6,1) -- (0,0) -- (-0.6,1) -- (-1.2,0) -- (-1.8,1) -- (-2.4,0) -- (1.2,0) -- (0.6,1) (-2.4,0) -- (-3,1) (1.2,0) -- (1.8,1) -- (2.4,0) -- (3,1) -- (3.6,0) -- (4.2,1) -- (4.8,0) -- (1.2,0);
    \draw (0.6,1) -- node[above] {$(a_2,R)$} (1.8,1) -- node[above] {$(a_3,R)$} (3,1) -- node[above] {$(a_4,R)$} (4.2,1);

    \node at (-3.6,0.5) {$\cdots$};
    \node at (5.4,0.5) {$\cdots$};

    \draw [thick,->] (-2.4,0.7) -- (-2.4,1);
    \draw [thick,->] (-1.2,0.7) -- (-1.2,1);
    \draw [thick,->] (0,0.7) -- (0,1);
    \draw [thick,->] (1.2,0.7) -- (1.2,1);
    \draw [thick,->] (2.4,0.7) -- (2.4,1);
    \draw [thick,->] (3.6,0.7) -- (3.6,1);

    \end{tikzpicture}

    Using the tiles on the bottom row of Figure~\ref{fig:detETTAM_TM}, The initial configuration $\cdots \mathtt{B} q_0 a_1 a_2 \cdots a_n \mathtt{B} \cdots$ self-assembles from the seed $(q_0 a_1, B_L, a_2, \mathtt{d})$ in a straightforward manner as shown just above.
    Each letter is coupled either with the indicater $L$ if the letter is to the left of the head or with $R$ otherwise.
    Note that the top edges with the TM head or to the left of the head are double-lined, and hence are bound to their matching bottom edges with strength 2.
    Thus, for instance, the upward alphabet tile with $L$ at its bottom can stick to these top edges without any cooperation so long as their letters match.
    This is not the case for the edges to the right of the head because their glue strength is 1.

    Let us consider the transition $\delta(q_0, a_1) = (q_1, b_1, \mathtt{R})$ first.
    Via the edges with strength 2, the upward alphabet tiles simultaneously stick to the edges located to the left of the TM head.
    In order for them to extend further by using the corresponding downward alphabet tiles, an action tile ($(q_0 a_1, q_1 b_1, A, \mathtt{u})$ in this case) has to be stuck to the super-tile.
    The action tile changes the state $q_0$ and the letter $a_1$ according to the transition to $q_1$ and $b_1$ deterministically, and its corresponding downward tile branches the letter up and the state to the right.
    Now the merging tile $((a_2,R), q_1 a_2, q_1, \mathtt{u})$ can attach by the cooperation of left and bottom edges, and the attachment of its corresponding upward tile immediately follows.
    The letters to the right of TM head are extended one by one in this manner.

    \begin{tikzpicture}

    \draw (-3,1) -- (-2.4,2) -- (-1.8,1) -- (-1.2,2) -- (-0.6,1) -- (0,2) (0.6,1) -- (1.2,2) (1.8,1) -- (2.4,2) (3,1) -- (3.6,2);
    \draw [double] (0,2) -- (0.6,1) (1.2,2) -- (1.8,1) (2.4,2) -- (3,1) (3.6,2) -- (4.2,1);

    \draw [double] (-3,1) -- (-1.8,1) -- (-0.6,1) -- node[below=5pt] {$q_0 a_1$} (0.6,1) -- (0,0) -- (-0.6,1) -- (-1.2,0) -- (-1.8,1) -- (-2.4,0) -- (1.2,0) -- (0.6,1) (-2.4,0) -- (-3,1) (1.2,0) -- (1.8,1) -- (2.4,0) -- (3,1) -- (3.6,0) -- (4.2,1) -- (4.8,0) -- (1.2,0);
    \draw (0.6,1) -- node[below=5pt] {$(a_2,R)$} (1.8,1) -- node[below=5pt] {$(a_3,R)$} (3,1) -- node[below=5pt] {$(a_4,R)$} (4.2,1);

    \draw [double] (-2.4,2) -- node[above] {$(B,L)$} (-1.2,2) -- node[above] {$(B,L)$} (0,2) -- node[above] {$(b_1,L)$} (1.2,2) -- node[above] {$q_1 a_2$} (2.4,2);
    \draw (2.4,2) -- node[above] {$(a_3,R)$} (3.6,2);

    \node at (-3.6,0.5) {$\cdots$};
    \node at (5.4,0.5) {$\cdots$};

    \draw [thick,->] (-2.4,0.7) -- (-2.4,1.33) -- (-1.8,1.66) -- (-1.8,2);
    \draw [thick,->] (-1.2,0.7) -- (-1.2,1.33) -- (-0.6,1.66) -- (-0.6,2);
    \draw [thick,->] (0,0.7) -- (0,1.33) -- (0.6,1.66) -- (0.6,2);
    \draw [thick,->] (1.2,0.7) -- (1.2,1.33) -- (1.8,1.66) -- (1.8,2);
    \draw [thick,->] (0.6,1.66) -- (0.9,1.5); \draw [thick] (0.9,1.5) -- (1.2,1.33);
    \draw [thick,->] (2.4,0.7) -- (2.4,1.33) -- (3,1.66) -- (3,2);
    \draw [thick] (3.6,0.7) -- (3.6,1.33) -- (3.9,1.5);

    \node at (0.9,1.3) {$q_1$};

    \end{tikzpicture}

    The transition $\delta(q_1, a_2) = (q_2, b_2, \mathtt{L})$ is simulated essentially in the same manner as the previous simulation so that it may suffice to illustrate it as follows:

    \begin{tikzpicture}

    \draw (-3,1) -- (-2.4,2) -- (-1.8,1) -- (-1.2,2) -- (-0.6,1) -- (0,2) (0.6,1) -- (1.2,2) (1.8,1) -- (2.4,2) (3,1) -- (3.6,2);
    \draw [double] (0,2) -- (0.6,1) (1.2,2) -- (1.8,1) (2.4,2) -- (3,1) (3.6,2) -- (4.2,1);

    \draw [double] (-3,1) -- (-1.8,1) -- (-0.6,1) -- node[below=5pt] {$q_0 a_1$} (0.6,1) -- (0,0) -- (-0.6,1) -- (-1.2,0) -- (-1.8,1) -- (-2.4,0) -- (1.2,0) -- (0.6,1) (-2.4,0) -- (-3,1) (1.2,0) -- (1.8,1) -- (2.4,0) -- (3,1) -- (3.6,0) -- (4.2,1) -- (4.8,0) -- (1.2,0);
    \draw (0.6,1) -- node[below=5pt] {$(a_2,R)$} (1.8,1) -- node[below=5pt] {$(a_3,R)$} (3,1) -- node[below=5pt] {$(a_4,R)$} (4.2,1);

    \draw [double] (-2.4,2) -- (-1.2,2) -- (0,2) -- node[below] {$(b_1,L)$} (1.2,2) -- node[below] {$q_1 a_2$} (2.4,2);
    \draw (2.4,2) -- (3.6,2);

    \node at (-3.6,0.5) {$\cdots$};
    \node at (5.4,0.5) {$\cdots$};

    \draw [thick,->] (1.8,2.33) -- (1.2,2.66) -- (1.2,3);

    \draw [thick,->] (-2.4,0.7) -- (-2.4,1.33) -- (-1.8,1.66) -- (-1.8,2.33) -- (-1.2,2.66) -- (-1.2,3);
    \draw [thick,->] (-1.2,0.7) -- (-1.2,1.33) -- (-0.6,1.66) -- (-0.6,2.33) -- (0,2.66) -- (0,3);
    \draw [thick,->] (0,0.7) -- (0,1.33) -- (0.6,1.66) -- (0.6,2.33) -- (1.2,2.66) -- (1.2,3);
    \draw [thick,->] (1.2,0.7) -- (1.2,1.33) -- (1.8,1.66) -- (1.8,2.33) -- (2.4,2.66) -- (2.4,3);
    \draw [thick,->] (0.6,1.66) -- (0.9,1.5); \draw [thick] (0.9,1.5) -- (1.2,1.33);
    \draw [thick,->] (2.4,0.7) -- (2.4,1.33) -- (3,1.66) -- (3,2.33) -- (3.6,2.66) -- (3.6,3);
    \draw [thick] (3.6,0.7) -- (3.6,1.33) -- (3.9,1.5);

    \node at (0.9,1.3) {$q_1$};
    \node at (2.1,2.3) {$b_2$};
    \node at (1.5,2.3) {$q_3$};

    \draw (-2.4,2) -- (-1.8,3) -- (-1.2,2) -- (-0.6,3) -- (0,2) -- (0.6,3) (1.2,2) -- (1.8,3) (2.4,2) -- (3,3) (3.6,2) -- (4.2,3);
    \draw [double] (0.6,3) -- (1.2,2) (1.8,3) -- (2.4,2) (3,3) -- (3.6,2);

    \draw [double] (-1.8,3) -- node[above] {$(B,L)$} (-0.6,3) -- node[above] {$(B,L)$} (0.6,3) -- node[above] {$q_3b_1$} (1.8,3);
    \draw (1.8,3) -- node[above] {$(b_2,R)$} (3,3) -- node[above] {$(a_3,R)$} (4.2,3);

    \end{tikzpicture}

    This Turing machine simulator consists of at most $2n+3+4|\Sigma|+m(1+3|\Sigma|)$ tiles, where $n$ is the length of the input $a_1 \cdots a_n$, and $m$ is the number of transitions defined in this Turing machine $M$.
\end{proof}

This simulation keeps tiling the plane upward until it reaches some halting configuration, i.e., the head is in a state $q$ and is on the cell with a letter $b$ such that $\delta(q, b)$ is not defined.
So the undecidability of halting problem of Turing machine leads us from this theorem to the following corollary.

\begin{corollary}
    It is undecidable whether a given deterministic equilateral triangular tile assembly system produces a super-tile other than full plane. 
\end{corollary}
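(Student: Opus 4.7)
The plan is a many-one reduction from the halting problem. Given a deterministic Turing machine $M$ and input $w = a_1 \cdots a_n$, I would instantiate the construction of Theorem~\ref{thm:equitria_TM} to obtain a deterministic equilateral triangular tile assembly system $T_{M,w}$, whose seed and bottom-row tiles encode the initial configuration $\cdots B q_0 a_1 a_2 \cdots a_n B \cdots$. The $C$-tiles fill the two lower quadrants, and the simulation grows upward, row by row.

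First I would verify that row $k$ of the produced super-tile encodes the configuration of $M$ after $k$ steps, exactly as argued in the proof of Theorem~\ref{thm:equitria_TM}: the strength-$2$ bonds at and to the left of the head, together with the cooperative strength-$1$ bonds to the right, force every upward extension to proceed through an action tile encoding a transition $\delta(q,b)$. Consequently, row $k+1$ can be assembled if and only if $\delta$ is defined on the head state and scanned symbol of row $k$. If $M$ halts on $w$ after $k$ steps in a state $q$ reading a symbol $b$ with $\delta(q,b)$ undefined, then no action tile of $T_{M,w}$ fits at the head position of row $k$, so neither the corresponding merging tile nor its neighbours can attach, and by determinism no other assembly path rescues the growth; the produce is therefore a bounded (finite) super-tile, which is not the full plane. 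Conversely, if $M$ never halts, every row is successfully assembled, and together with the $C$-tiles covering the lower half-plane the produce is the full plane.

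Thus $M$ halts on $w$ if and only if the produce of $T_{M,w}$ is a super-tile other than the full plane, reducing the halting problem to the problem in the corollary. Since halting is undecidable, the corollary follows. The only delicate step is ruling out spurious growth beyond the halted row: this reduces to checking that, once an action tile is missing at the head column, neither the left-$L$ alphabet extension nor the right-$R$ merging chain can propagate across that column, which is immediate from the fact that every extension above row $k$ either requires cooperation at the head column or requires a glue produced only by the absent action tile.
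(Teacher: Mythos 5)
Your proposal is correct and follows essentially the same route as the paper, which derives the corollary directly from the Theorem~\ref{thm:equitria_TM} construction by observing that the upward simulation stalls exactly at a halting configuration (where $\delta(q,b)$ is undefined), so the produce is the full plane iff $M$ does not halt; your write-up merely spells out the reduction and the no-spurious-growth check in more detail. One small inaccuracy: when $M$ halts the produce is not bounded or finite, since the $C$-tiles still fill the lower half-plane, but it is indeed not the full plane, which is all the reduction needs.
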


For any equilateral triangular tile assembly system
$S=(T,s,g,\temper)$, we define the \emph{``flattened''} right
triangular assembly system $\mathcal{F}(S)=(T',f(s),g,\temper)$,
where $T'=\set{f(t),t\in T}$, $f(\gamma_1,\gamma_2,\gamma_3,{\tt
u})= (\gamma_1,\gamma_2,\gamma_3,{\tt n})$, and
$f(\gamma_1,\gamma_2,\gamma_3,{\tt d})=
(\gamma_1,\gamma_2,\gamma_3,{\tt s})$. Then $S$ produces a
super-tile that is not the full plane if and only if
$\mathcal{F}(S)$ also produces a super-tile that is not the full
plane. Then the following corollaries of Theorem~\ref{thm:equitria_TM}
hold.

\begin{corollary}
    Deterministic right triangular tile assembly system is Turing universal at temperature $\tau$ for $\tau \ge 2$.
\end{corollary}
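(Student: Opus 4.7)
The plan is to reduce this corollary to Theorem~\ref{thm:equitria_TM} via the flattening construction $\mathcal{F}$ that was introduced immediately before the statement. Given a deterministic Turing machine $M$, I would first invoke Theorem~\ref{thm:equitria_TM} to obtain a deterministic equilateral triangular tile assembly system $S=(T,s,g,\temper)$ at temperature $\temper\ge 2$ whose production behaves like $M$: the super-tile produced by $S$ fails to be the full plane precisely when $M$ halts on its given input. Then I would apply $\mathcal{F}$ to $S$ and show that $\mathcal{F}(S)=(T',f(s),g,\temper)$ inherits all the relevant properties.

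The steps in order are as follows. First, verify that $\mathcal{F}(S)$ is a well-defined right triangular tile assembly system at the same temperature $\temper\ge 2$: the map $f$ merely relabels the positional indicator $\mathtt{u}\mapsto\mathtt{n}$ and $\mathtt{d}\mapsto\mathtt{s}$ while keeping the three glue labels and their counter-clockwise order intact, so the strength function $g$ applies unchanged. Second, check that $f$ is a bijection between upward (resp.\ downward) equilateral tiles and north-pointing (resp.\ south-pointing) right triangular tiles, and that it carries sticking relations to sticking relations; consequently the assembly sequences of $S$ are in one-to-one correspondence with those of $\mathcal{F}(S)$. Third, deduce that $\mathcal{F}(S)$ is deterministic because $S$ is. Fourth, invoke the equivalence stated just before this corollary, namely that $S$ produces a super-tile different from the full plane iff $\mathcal{F}(S)$ does, so that $\mathcal{F}(S)$ simulates $M$ in the same sense as $S$. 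Since this works for every deterministic $M$, the tiling-full-plane problem for deterministic right triangular TAMs at $\temper\ge 2$ is at least as hard as the halting problem, which is the required Turing universality.

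The only step that warrants a brief check, and hence the main obstacle if any, is verifying that $f$ preserves the geometric compatibility of sticking: an upward and a downward equilateral tile abut along one of three edges in the oblique coordinate system, and I need to confirm that their images under $f$ abut along the corresponding edges in the rectangular layout of right triangles in the same cyclic order. This is precisely why $f$ was defined to leave the $(\gamma_1,\gamma_2,\gamma_3)$ tuple untouched and only alter the orientation symbol, so the verification reduces to reading off the edge correspondence from the coordinate conversion discussed at the beginning of Section~3. Once that is in place, determinism, Turing simulation, and undecidability all transfer immediately.
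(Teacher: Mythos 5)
Your proposal is correct and follows essentially the same route as the paper: the paper gives no separate proof of this corollary, but derives it directly from the flattening map $\mathcal{F}$ defined in the preceding paragraph, together with the observation that $S$ produces a non-full-plane super-tile iff $\mathcal{F}(S)$ does. Your additional verification that $f$ preserves the sticking relation and determinism simply makes explicit what the paper leaves implicit.
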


\begin{corollary}
    It is undecidable whether a given deterministic right triangular tile assembly system produces a super-tile other than full plane.
\end{corollary}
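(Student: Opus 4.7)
The plan is to reduce from the already-established undecidability for deterministic equilateral triangular tile assembly systems (the corollary immediately following Theorem~\ref{thm:equitria_TM}) via the flattening transformation $\mathcal{F}$ introduced in the paragraph just above the statement. Given an arbitrary deterministic equilateral triangular TAS $S=(T,s,g,\temper)$, I would form $\mathcal{F}(S)=(T',f(s),g,\temper)$ as defined and argue that an algorithm deciding whether $\mathcal{F}(S)$ produces a super-tile other than the full plane would, by composition with the effectively computable map $S\mapsto\mathcal{F}(S)$, also decide the equilateral version, which is impossible.

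First I would note that the reduction itself is trivially computable: $\mathcal{F}$ merely relabels each tile's orientation symbol from $\mathtt{u}$ to $\mathtt{n}$ and from $\mathtt{d}$ to $\mathtt{s}$ while leaving the glue set, strength function, temperature, and seed intact. Next I would verify the two correctness properties invoked in the paragraph preceding the statement. The first is preservation of determinism: because $f$ is a bijection on tile types and preserves the glue labels on all three sides, any cooperative binding that is unique (respectively ambiguous) in $S$ corresponds bijectively to a unique (respectively ambiguous) binding in $\mathcal{F}(S)$, so $\mathcal{F}(S)$ is deterministic whenever $S$ is. The second is preservation of the ``full plane vs.\ not'' dichotomy, which I would justify by exhibiting the affine transformation (already alluded to in Section~3) that sends the oblique lattice $C_{\pi/3}$ to the right-triangular lattice compatible with $\{\mathtt{n},\mathtt{s}\}$ tiles; under this bijection the assembly sequences of $S$ and $\mathcal{F}(S)$ are in step-by-step correspondence, so one produces the full plane iff the other does.

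Combining these two ingredients, $S$ fails to produce the full plane iff $\mathcal{F}(S)$ fails to produce the full plane. Hence a decision procedure for the right triangular problem would yield one for the equilateral problem, contradicting the preceding corollary.

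The main obstacle, such as it is, will be writing the correspondence between assembly sequences carefully enough to convince the reader that no binding in the right triangular system is enabled that was not already enabled in the equilateral system (and vice versa); since glues, strengths, and the temperature are untouched and the bijection $f$ respects the counter-clockwise ordering of sides starting from the horizontal/longest edge, this is more a matter of bookkeeping than of any genuinely new idea.
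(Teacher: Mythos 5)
Your proposal is correct and follows essentially the same route as the paper: the paper also derives this corollary by applying the flattening map $\mathcal{F}$ to the equilateral construction and invoking the equivalence that $S$ produces a non-full-plane super-tile iff $\mathcal{F}(S)$ does. You simply spell out in more detail (computability of the reduction, preservation of determinism, the affine lattice correspondence) what the paper asserts in the paragraph preceding the two corollaries.
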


One advantage of right triangles over equilateral ones is that right triangles can tile the square grid, and actually there are two ways to fill a square being rotated by $\pi/4$ with right triangles: east and west triangles or north and south triangles.
This fact enables us to handle more intuitively the ``input-split'' problem which the equilateral triangular TAM has already encountered.
That is, the square tile which merges a state from the left is split into half from its left-top to right-bottom, while the tile which merges from the right is cut from right-top to left-bottom.
Figure~\ref{fig:rtri_TM} illustrates the right triangular TAM designed according to this idea, which simulates a deterministic Turing machine $M=(Q,\Sigma,\Gamma,\delta,q_0,{\tt B},F)$ on an input $a_1a_2a_3\ldots a_n$.
One can verify by definition that the given system is a deterministic right triangular tile assembly system with at most $2n+4+4|\Sigma|+m(1+2\abs{\Sigma})$ tiles (slightly better than the flattened right triangular TAM), where $n$ is the length of the input of the Turing machine $M$ and $m$ is the number of transitions defined in the Turing machine $M$.

\begin{figure}
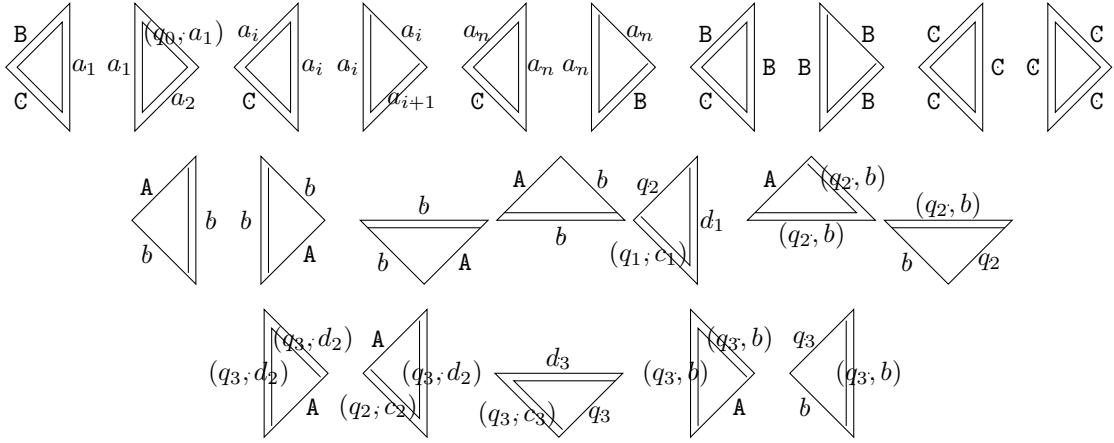

\centering
  \rtilewddd{$a_1$}{$\tt B$}{$\tt C$}
  \rtileeddd{$a_1$}{$a_2$}{$(q_0,a_1)$}
  \rtilewddd{$a_{i}$}{$a_{i}$}{$\tt C$}
  \rtileedds{$a_{i}$}{$a_{i+1}$}{$a_{i}$}
  \rtilewddd{$a_n$}{$a_n$}{$\tt C$}
  \rtileedds{$a_n$}{$\tt B$}{$a_n$}
  \rtilewddd{$\tt B$}{$\tt B$}{$\tt C$}
  \rtileedds{$\tt B$}{$\tt B$}{$\tt B$}
  \rtilewddd{$\tt C$}{$\tt C$}{$\tt C$}
  \rtileeddd{$\tt C$}{$\tt C$}{$\tt C$}

  \rtilewdss{$b$}{$\tt A$}{$b$}
  \rtileedss{$b$}{$\tt A$}{$b$}
  \rtilesdss{$b$}{$b$}{$\tt A$}
  \rtilendss{$b$}{$b$}{$\tt A$}
  \rtilewdsd{$d_1$}{$q_2$}{$(q_1,c_1)$}
  \rtilendds{$(q_2,b)$}{$(q_2,b)$}{$\tt A$}
  \rtilesdss{$(q_2,b)$}{$b$}{$q_2$}

  \rtileedsd{$(q_3,d_2)$}{$\tt A$}{$(q_3,d_2)$}
  \rtilewdsd{$(q_3,d_2)$}{$\tt A$}{$(q_2,c_2)$}\quad
  \rtilesdds{$d_3$}{$(q_3,c_3)$}{$q_3$}\qquad
  \rtileedsd{$(q_3,b)$}{$\tt A$}{$(q_3,b)$}
  \rtilewdss{$(q_3,b)$}{$q_3$}{$b$}
  \caption{Right triangular tiles for the simulation of Turing machines with the seed $(a_1,{\tt B,C,w})$.
  In this figure, $a_i$ are the input, $b\in\Sigma$ iterate all letters, $q_i\in Q\setminus F$,
  $\delta(q_1,c_1)=(p_1,d_1,{\tt L})$,
  $\delta(q_2,c_2)=(p_2,d_2,{\tt N})$, and
  $\delta(q_3,c_3)=(p_3,d_3,{\tt R})$.}\label{fig:rtri_TM}
\end{figure}

The simulations of Turing machines by TAMs with different shapes given in this section negates the idea that the right triangular TAM or even equilateral triangular TAM might be completely equivalent to the square TAM, topologically, in spite of the different tile shapes.
We will proceed this investigation further in the next section.

\section{Shape Complexity}
We can a shape \emph{compatible} with a given type of tile assembly
system, if the region occupied by that shape on the two dimensional
plane can be tiled geometrically by the tiles. It is obviously that
if a shape is not compatible, then no super-tile of that shape can
be produced by tile assembly systems. For example, a pie cannot be
assembled by unit square tiles nor triangular tiles. So in the
remaining discussion, we only consider the assembly of compatible
shapes.

\begin{proposition}\label{prop:t1}
Any compatible shape can be produced by a non-deterministic
triangular tile assembly system with $O(1)$ tiles, and by a
deterministic triangular tile assembly system with $A$ tiles, where
$A$ is the totally number of tiles to geometrically assemble that
shape.
\end{proposition}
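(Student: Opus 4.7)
The plan is to give two explicit constructions, handling the deterministic and non-deterministic cases separately and in each case letting the glue structure carry all the geometric information.

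For the deterministic claim I would label the $A$ cells of the compatible shape by distinct identifiers $1,\dots,A$ and introduce one tile type per cell. For every pair of adjacent cells $i,j$ in the shape, introduce a private glue $\gamma_{ij}$ of strength $\temper$ that appears only on the two touching sides of tiles $i$ and $j$; every side facing the exterior of the shape receives $\phi$. Since each $\gamma_{ij}$ appears on exactly two sides in the entire tile set, any exposed internal edge of the current super-tile has at most one matching tile, so every attachment is forced. Choosing the tile of any fixed cell as seed and using edge-connectedness of the shape, the unique produce equals the shape and uses exactly $A$ tile types.

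For the non-deterministic claim the key idea is to record only the \emph{boundary pattern} of a cell --- which of its three sides face outside the shape --- and not its identity. For each orientation of a triangular tile (two for equilateral, four for right triangles) and each of the $2^3 = 8$ assignments of ``interior''/``exterior'' to the three sides, introduce a tile type whose interior sides carry a single common glue $\alpha$ of strength $\temper$ and whose exterior sides carry $\phi$. The total is at most $32$, i.e.\ $O(1)$. Designate as seed the tile whose orientation and boundary pattern agree with some fixed cell of the shape. To realise the shape as a produce, fix any spanning tree of the cell-adjacency graph of the shape rooted at the seed cell and grow in traversal order, at each step attaching the tile whose orientation and boundary pattern match the next cell to be placed. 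Compatibility guarantees that two adjacent cells share a side of equal length both marked interior, so $\alpha$ supplies the required strength $\temper$. After $A$ attachments the super-tile is geometrically the shape; every remaining exposed side carries $\phi$; no further tile can stick; hence the shape is a maximal super-tile, i.e.\ a produce.

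The step requiring the most care is checking, for the non-deterministic construction, that the $O(1)$ tile library indeed contains a tile matching the boundary pattern of every cell that arises during the traversal, and that the $\alpha$-glued attachment step is always legal. The former is automatic because all $2^3$ patterns per orientation are included; the latter follows from the compatibility hypothesis on the shape. Note that for the non-deterministic claim one does not need to exclude other produces --- the shape only has to be \emph{one} possible output of the system --- which sidesteps the subtler question of what happens under ``wrong'' choices during growth, where the $\alpha$ glue on a misplaced exterior side could allow growth beyond the shape's boundary.
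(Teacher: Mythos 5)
Your proposal is correct and follows essentially the same route as the paper: the non-deterministic system is exactly the paper's tile set $\set{(a,b,c,k):a,b,c\in\set{\phi,g}}$ over all orientations (your $\alpha$ is the paper's $g$, your $2^3$ interior/exterior patterns are the paper's choices of $g$ versus $\phi$ per side), and the deterministic system is the paper's ``make each pair of sticking sides a unique glue'' construction. You supply more detail than the paper does --- in particular the spanning-tree growth order, the check that the finished shape is terminal because all exposed sides carry $\phi$, and the forced-attachment argument for determinism --- but the underlying constructions are the same.
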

\begin{proof}
First we consider the equilateral triangles in the non-deterministic
cases. Consider the set of tiles
$T=\set{(a,b,c,k):a,b,c\in\set{\phi,g},k\in\set{\tt u,d}}$. All glue
of $g$ are of strength $1$ and temperature is $\temper=1$. The seed
and assembly process is as follows: the super-tile of given shape is
assembled according to the geometrical division of the region. This
can be done since the shape is compatible. At each step, a tile
sticks to the super-tile in such a way that if the tile is
surrounded by other tiles in the completed region, then every edge
of that tile is of glue $g$; otherwise, the edge that composes the
border of that region is of empty glue $\phi$.

If we make each pair of stick sides with unique glue, then the shape
can be assembled deterministically by $O(A)$ tiles.

The right triangles case is similar.
\end{proof}

Proposition~\ref{prop:t1} can be generalized to tiles of other
shapes, such as square tiles. In what follows, we only consider
deterministic tile assembly systems.

A right triangular tile assembly system $T$ is called a
\emph{division} of a square tile assembly system $S$ if for any tile
$s$ in $S$, there is a pair of tiles $t,t'$ such that on temperature
$\temper\geq1$ tiles $t,t'$ can produce $s$ with $\pi/4$ rotation;
and for any tile $t$ in $T$, there is a tile $t'$ in $T$ and a tile
$s$ in $S$ such that on temperature $\temper\geq 1$ tiles $t,t'$ can
produce $s$ with $\pi/4$ rotation. By the definition, division of a
square tile assembly system may not be unique, and a right
triangular tile assembly system can be division of two different
square tile assembly system. The number of tiles in the two systems
satisfies the inequality
  \[2\sqrt{\#S}\leq\#T\leq4\cdot\#S,\]
where $\#$ presents the number of tiles in each system. A
equilateral triangular tile assembly system $T$ is called a division
of a square tile assembly system $S$ if the flattened tile assembly
system $\mathcal{F}(T)$ is a division of $S$.

\begin{lemma}\label{lem:countereg}
There exists a square tile assembly system $S$ such that no division
of $S$ can produce the same shape with $\pi/4$ rotation.
\end{lemma}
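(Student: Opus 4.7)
The plan is to exhibit a small square tile assembly system $S$ at temperature $\tau=2$ whose shape forces one tile to attach by the cooperation of its north and south edges, and then to argue that this opposite-edge cooperation cannot be delivered to any single triangle in any division of $S$. Concretely, I let $S$ assemble the ``ring'' $R$ formed by the eight unit cells of a $3\times 3$ block with the centre cell $(1,1)$ removed, with seed at the corner $(0,0)$. Seven of the adjacencies around the ring are endowed with unique strength-$2$ glues so that, starting from the seed, the outer path $(0,0)\to(1,0)\to(2,0)\to(2,1)\to(2,2)\to(1,2)\to(0,2)$ grows one tile at a time. The last cell $s_{(0,1)}$ has its south edge (shared with $s_{(0,0)}$) and its north edge (shared with $s_{(0,2)}$) assigned glues of strength~$1$, while its east and west edges carry the null glue $\phi$ since the centre cell is absent. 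Consequently $s_{(0,1)}$ attaches only once $s_{(0,0)}$ and $s_{(0,2)}$ are already in place, by the $N+S$ cooperation of total strength~$2$, and the determinism of $S$ is immediate because every glue label appears on exactly one pair of matching edges.

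Next I let $T$ be any division of $S$ and examine the diamond $D$ at the position of $s_{(0,1)}$. The key geometric observation is that in each of the two allowed cuts of a unit square --- the horizontal cut into an $\mathtt{n}/\mathtt{s}$ pair or the vertical cut into a $\mathtt{w}/\mathtt{e}$ pair --- the north and south edges always land on two different triangles. Since the east and west edges at $D$ face $\phi$, each of the two triangles composing $D$ has at most one external in-shape neighbour, and the associated glue is of strength $1$. So to reach $\tau=2$ each triangle must recruit additional strength from the internal edge it shares with its partner, which in turn requires that partner to be already placed; this yields a circular dependency between the two triangles of $D$ that no choice of internal glue strength or split direction can break.

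The final step rules out the only possible escape, namely taking the seed of $T$ inside $D$ so that its partner attaches via the internal glue. In that scenario the partner does indeed attach, but the assembly cannot propagate outward: each triangle of $D$ exposes only a single strength-$1$ edge to the outside world, whereas the neighbouring triangles in the rest of the ring still require a second contribution to clear $\tau=2$. Any seed placed outside $D$, on the other hand, reproduces the outer ring exactly as $S$ does and then stalls at $D$. In either case the produce of $T$ is a proper subset of the rotated produce of $S$, so no division of $S$ produces the rotated shape. The main technical obstacle of the proof is this last case analysis, but it collapses to the single observation that the strength-$1$ edges at $(0,1)$ form the only interface between $D$ and the rest of the ring in $T$.
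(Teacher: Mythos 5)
Your proposal is correct and matches the paper's own approach: the paper's second counterexample (Figure~\ref{fig:countereg}, right) is precisely the $3\times3$ ring with the centre cell removed at $\temper=2$, in which the final tile must attach by the cooperation of two strength-$1$ glues on its north and south edges, and the key point in both arguments is that any of the two admissible cuts of the rotated square places opposite edges on different triangles, so neither half can gather strength $2$ first. The only difference is that the paper leaves the verification as ``one can verify that none of them produce the same shape,'' whereas you spell out the circular-dependency and seed-placement case analysis explicitly.
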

\begin{proof}
We presents two examples here. The produce of the two square tile
assembly systems are illustrated in Figure~\ref{fig:countereg},
where each glue is unique and the strength is illustrated by the
number of parallel edges. The temperatures are of $\temper=3$ and
$\temper=2$ respectively. The number of divisions of the square tile
assembly system is finite. One can verify that none of them produce
the same shape with the original system. \qedhere

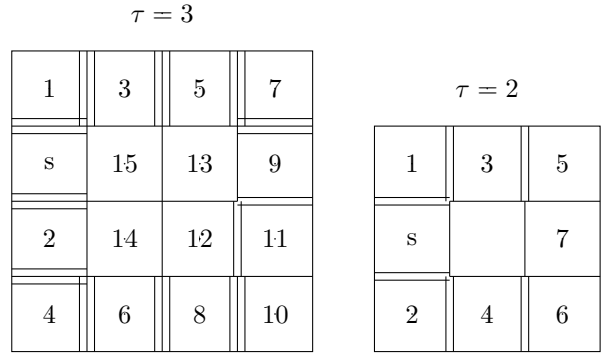
\begin{figure}
\centering
  \begin{picture}(40,40)
  \gasset{AHnb=0, Nw=0, Nh=0, Nframe=no}
  \drawpolygon(0,0)(40,0)(40,40)(0,40)
  \drawline(0,10)(40,10)\drawline(0,20)(30,20)\drawline(0,30)(40,30)
  \drawline(10,40)(10,0)\drawline(20,40)(20,0)\drawline(30,40)(30,20)\drawline(30,10)(30,0)
  \drawline(0,11)(10,11)\drawline(0,9)(10,9)\drawline(0,21)(10,21)\drawline(0,19)(10,19)
  \drawline(0,31)(10,31)\drawline(0,29)(10,29)\drawline(30,31)(40,31)\drawline(30,29)(40,29)
  \drawline(9,40)(9,30)\drawline(11,40)(11,30)\drawline(9,10)(9,0)\drawline(11,10)(11,0)
  \drawline(19,40)(19,30)\drawline(21,40)(21,30)\drawline(19,10)(19,0)\drawline(21,10)(21,0)
  \drawline(29,40)(29,30)\drawline(31,40)(31,30)\drawline(29,10)(29,0)\drawline(31,10)(31,0)
  \drawline(30,19.5)(40,19.5)\drawline(30,20.5)(40,20.5)
  \drawline(29.5,20)(29.5,10)\drawline(30.5,20)(30.5,10)
  \node()(5,25){s}\node()(20,45){$\temper=3$}\node()(5,35){1}\node()(5,15){2}
  \node()(15,35){3}\node()(5,5){4}\node()(25,35){5}\node()(15,5){6}\node()(35,35){7}\node()(25,5){8}
  \node()(35,25){9}\node()(35,5){10}\node()(35,15){11}\node()(25,15){12}
  \node()(25,25){13}\node()(15,15){14}\node()(15,25){15}
  \end{picture}
  \qquad
  \begin{picture}(30,30)
  \gasset{AHnb=0, Nw=0, Nh=0, Nframe=no}
  \drawpolygon(0,0)(30,0)(30,30)(0,30)
  \drawline(10,10)(30,10)\drawline(10,20)(30,20)\drawline(10,20)(10,10)\drawline(20,20)(20,10)
  \drawline(0,9.5)(10,9.5)\drawline(0,10.5)(10,10.5)\drawline(0,19.5)(10,19.5)\drawline(0,20.5)(10,20.5)
  \drawline(9.5,30)(9.5,20)\drawline(10.5,30)(10.5,20)\drawline(9.5,10)(9.5,0)\drawline(10.5,10)(10.5,0)
  \drawline(19.5,30)(19.5,20)\drawline(20.5,30)(20.5,20)\drawline(19.5,10)(19.5,0)\drawline(20.5,10)(20.5,0)
  \node()(5,15){s}\node()(15,35){$\temper=2$}\node()(5,25){1}\node()(5,5){2}
  \node()(15,25){3}\node()(15,5){4}\node()(25,25){5}\node()(25,5){6}\node()(25,15){7}
  \end{picture}
  \caption{Two counter-examples that square tile assembly system cannot be simulated by triangular
  tile assembly system. Each glue is unique and thus the label is omitted.}\label{fig:countereg}
\end{figure}
\end{proof}

The right super-tile in Figure~\ref{fig:countereg} is of shape with
a missing tile in the middle, and we call it has \emph{``hole''}.
More formally, we say a super-tile has no hole if it is full and for
every closed path of tiles, all enclosed region is occupied by
tiles.

\begin{lemma}
For any square tile assembly system $S$ under temperature
$\temper=1$ or under temperature $\temper=2$ with no hole in the
produce, there is a division of $S$ that can produce the same shape
with $\pi/4$ rotation.
\end{lemma}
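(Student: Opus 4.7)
The plan is to split each square tile $s\in T_S$ along a diagonal into two right triangles, retain the four external glues of $s$ on the triangles' legs, and put on the shared hypotenuse a fresh internal glue $h_s$ unique to the tile type $s$; the seed of the triangular system is the pair of triangles making up $S$'s seed. I will then show that the resulting right triangular system produces the same shape as $S$ under the $\pi/4$ rotation that sends axis-aligned unit squares to unit diamonds.

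For $\temper=1$, I would fix a single diagonal direction (say N+S) for every tile type and set $g(h_s,h_s)=1$. The equality of produces follows by induction on the attachment order of $S$: each time $S$ attaches $s$ at a new position through some strength-$\ge 1$ edge, the triangle of $s$ carrying that edge as a leg can stick through the same glue match, after which its paired triangle sticks via the hypotenuse $h_s$. Conversely, because each $h_s$ is unique to its tile type, hypotenuses can only pair with their intended partners, every leg match in the triangular system mirrors an edge match in $S$, and no spurious triangle ever appears.

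For $\temper=2$ with no holes the main subtlety is cooperative attachment. If $s$ attaches in $S$ via two strength-$1$ edges on \emph{adjacent} sides (e.g.\ $W(s)+S(s)$), choosing the diagonal that places both glues on the same triangle lets that triangle stick via two-leg cooperation, and then the paired triangle follows through the hypotenuse. To handle all four adjacent pairs uniformly I would include \emph{both} the E+W and the N+S splits of every tile type --- well within the $\#T\le 4\cdot\#S$ bound stated earlier --- and give each split its own fresh hypotenuse glue of strength $2$, so that once some triangle of $s$ is placed its partner follows via the strength-$2$ hypotenuse alone. The remaining worry is that $s$ might be required in $S$ to attach via a pair of strength-$1$ edges on \emph{opposite} sides ($W+E$ or $N+S$), which no single diagonal split can accommodate on one triangle.

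The heart of the proof is therefore to show that this opposite-pair case never arises under the no-hole hypothesis, and this is the main obstacle. Suppose some tile $p\in P$ could only bind by an opposite cooperation, say $W(p)+E(p)$: then the glues $N(p)$ and $S(p)$ both have strength $0$. But the ``full'' half of the no-hole hypothesis forces any in-$P$ neighbor of $p$ to bind to $p$ with strength $\ge 1$, so the north and south positions of $p$ cannot lie in $P$ --- $p$ sits in a one-cell-wide horizontal corridor. The simple-connectedness half of the no-hole hypothesis then prohibits this corridor from closing into a loop, because a loop of corridor tiles would enclose the missing north/south cells and create a hole; hence the corridor is a dead-end or a bridge traversed sequentially by any valid assembly of $S$, so at the moment $p$ is attached at most one of its $W$ and $E$ neighbors is in place, cooperation is impossible, and the $S$-attachment of $p$ must in fact use a single strength-$\ge 2$ edge. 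Consequently every tile in $P$ is attached by either a single strength-$\ge 2$ edge or an adjacent-pair cooperation; the triangular division fills every square position through the appropriate split, and the uniqueness of the hypotenuse glues again prevents extraneous attachments, so the triangular produce coincides with the $\pi/4$-rotation of $P$.
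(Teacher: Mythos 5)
Your construction of the division (diagonal splits carrying the four original glues on the legs, a fresh strength-$\temper$ glue unique to each split on the hypotenuse, both diagonal orientations included so that every adjacent pair of sides lands on a single triangle) is the same as the paper's, and your $\temper=1$ case and the ``no spurious attachment'' argument via uniqueness of the hypotenuse glues are fine. The $\temper=2$ case also aims at the same target as the paper --- ruling out attachments that cooperate only through two \emph{opposite} sides --- but your argument for it has a genuine gap. The critical unjustified step is ``Suppose $p$ could only bind by an opposite cooperation $W(p)+E(p)$: then the glues $N(p)$ and $S(p)$ both have strength $0$,'' from which you conclude that the north and south cells lie outside the shape and $p$ sits in a one-cell-wide corridor. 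Neither implication holds. The north neighbor of $p$ may perfectly well belong to the produce with a positive-strength shared edge and yet be forced to attach \emph{after} $p$ in every assembly order (for instance because it can only attach cooperatively using its edge with $p$ together with another edge that becomes available only later). In that situation $p$ is not in a corridor of the shape at all; attaching $p$ via $W+E$ closes a loop around a cell of $P$ that is simply filled later, so only an \emph{intermediate} super-tile has a hole while the final produce is hole-free --- and the hypothesis of the lemma constrains only the produce. Your contradiction therefore evaporates exactly in the case that matters. (There is also a quantifier slip: showing that no tile is forced into opposite cooperation in \emph{every} order does not by itself exhibit a single order in which \emph{all} attachments are nice, which is what the division needs.)

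The missing ingredient, which is precisely what the paper supplies first, is a lemma that some assembly order of $S$ keeps \emph{every intermediate} super-tile hole-free. The paper proves this by an extremal/exchange argument: take an order that postpones the first appearance of a hole as long as possible; at that first step a tile of the hole region could, by fullness, have been attached one step earlier, contradicting maximality. Only after restricting to such an order does the opposite-sides case genuinely force a hole: if $t$ attaches by north and south only, fullness forces its east and west cells to be empty at that moment, and connectivity of $st_{i-1}$ then closes a loop around one of them, contradicting hole-freeness of $st_i$. Your corridor picture is essentially the second half of this argument, but without the first half it does not go through. To repair your proof, insert the hole-free-process lemma before your case analysis and run your loop argument inside such a process.
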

\begin{proof}
For $\temper=1$, the proof is straightforward. For any square tile
$s_i$ with glues $\gamma_1,\gamma_2,\gamma_3,\gamma_4$ (on east,
north, west, south sides, respectively) in $S$, we replace it with a
pair of right triangular tiles $(i,\gamma_1,\gamma_2,{\tt n})$ and
$(i,\gamma_3,\gamma_4,{\tt s})$. Then the new right triangular tiles
is a division of $S$ and produce the same shape with $\pi/4$
rotation.

For $\temper=2$, now we assume there is no hole in the produce of
$S$.

First we prove that there is an assembly process
$st_0,st_1,st_2,\ldots,st_n$ such that every super-tile $st_i$ in
the process has no hole in it. Otherwise, we pick such a process
that the steps of the first appearance of a hole super-tile is the
largest among all assembly process. Let $st_i$ be the first
appearance of a hole super-tile. Then there is a tile $t$ in the
hole region that will stick to the super-tile later and there are
two adjacent edges that can stick to that tile due to the fullness
of the produce. So, we can add $t$ to the super-tile $st_{i-1}$ and
get a new $st_i'$ that does not have hole, which contradiction to
the choice of the process.

Now we prove that for the assembly process without hole super-tile,
the new tile can stick to the super-tile at each step by two
adjacent edges. Otherwise, suppose $st_{i-1}$ becomes $st_i$ by
sticking $t$ and $t$ only stick to $st_{i-1}$ either by north and
south sides or by east and west sides. Without loss of generality,
suppose it is by north and south sides. Since the produce is full,
there is no tiles on the east and on the west sides, or $t$ can
stick by two adjacent edges. But in this case, since $st_{i-1}$ is
connect, $st_i$ must contains a hole, which contradicts the fact
$st_i$ has no hole.

Since new tile can stick to the super-tile at each step by two
adjacent edges, we can add a pair of right triangular tiles to
simulate that square tiles; and we let the strength on the cutting
edges be $\geq\temper$. Do so for the whole assembly process, and we
get a new right triangular tiles, which is a division of $S$ and
produce the same shape with $\pi/4$ rotation.
\end{proof}

By the previous two lemmas, we see that square tile assembly systems
can be simulated by their division only under certain conditions.
Not we discuss the relation between the two types of tile assembly
systems in another direction: whether every right triangular tile
assembly system can be simulated by a square triangular tile
assembly system, where we assume the produce of the system is
compatible with the tiles.

\begin{lemma}\label{lem:counteregg}
There exists a right triangular tile assembly system $T$ such that
there is no square tile assembly system $S$ that produce the same
shape with $\pi/4$ rotation.
\end{lemma}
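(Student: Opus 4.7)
The plan is to exhibit a right triangular TAM whose produce, after rotation by $\pi/4$, fails to be a union of axis-aligned unit squares, i.e.\ fails to be compatible with square tiles. Combined with the observation just before Proposition~\ref{prop:t1} that no super-tile of an incompatible shape can ever be produced, this immediately rules out every square TAM.

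Concretely, I would take $T = (\set{s}, s, g, \temper)$ with a single tile $s = (\phi, \phi, \phi, \mathtt{w})$, whose three border glues are all the non-interactive glue $\phi$. Since $g(\phi, \gamma) = 0$ for every glue $\gamma$, no tile in $T$ can stick to $s$, and so the unique produce of $T$ is the singleton super-tile $s$ itself. Its underlying planar region $R$ is one right triangle with legs of length $1$, hence of area $1/2$.

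Let $R'$ denote the image of $R$ under rotation by $\pi/4$; since rotation preserves area, $R'$ also has area $1/2$. On the other hand, every square super-tile is a finite, non-overlapping union of unit square tiles, and so its underlying region has positive integer area. Because $1/2$ is not an integer, no square tile assembly system $S$ can produce a super-tile whose shape equals $R'$, which is the desired contradiction.

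If a less degenerate witness is preferred, the same area-parity argument disposes of any right triangular TAM whose produce contains an odd number of tiles, for instance a three-tile L-shaped produce of area $3/2$. I do not anticipate any real technical obstacle here: the content of the lemma is essentially the observation that the right triangular tile family has strictly finer granularity than the square family, so any assembled shape with an odd number of constituent triangles already separates the two models.
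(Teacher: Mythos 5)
Your argument establishes only a degenerate reading of the lemma and misses its actual content. The observation you invoke---that a super-tile of an incompatible shape can never be produced by square tiles---is exactly the reason the paper restricts attention, from the opening of this section, to compatible shapes (``we only consider the assembly of compatible shapes''). A single right triangle of area $1/2$, or any region made of an odd number of triangles, is not compatible with square tiles and is therefore outside the comparison the lemma is making; exhibiting such a witness shows nothing beyond the ``finer granularity'' remark you close with, which the paper has already dismissed as obvious. As the abstract and conclusion make explicit, the intended claim is stronger: there is a right triangular system whose produce has a shape that \emph{is} compatible with square tiles (a union of unit squares after the $\pi/4$ rotation), together with prescribed border glues, and yet no square tile assembly system can have that super-tile as a \emph{terminal} produce.

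The paper's witness (Figure~\ref{fig:counteregg}) is, after rotation, a rectangle with one unit square notched out of a corner, at temperature $\temper=2$. The outer bottom-left corner square exposes border glues $\mathtt{a}$ and $\mathtt{b}$ on its two boundary sides, and the two edges bounding the concave notch carry the same glues $\mathtt{a}$ and $\mathtt{b}$ in matching orientation. Any square system realizing this super-tile must place a single square tile at that corner, and that tile necessarily carries both $\mathtt{a}$ and $\mathtt{b}$ on adjacent sides; it can therefore also attach inside the notch, where both of those sides find matching neighbours, so the square system keeps growing past the notched shape and never terminates there. The triangular system escapes this because the corner square is split along its diagonal, putting $\mathtt{a}$ and $\mathtt{b}$ on two \emph{different} triangular tiles, neither of which fits or binds in the notch. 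Your proposal contains no mechanism of this kind---no analysis of glues, cooperative binding, or terminality---so it cannot be repaired by merely enlarging the odd region; you would need to construct a square-compatible shape and glue assignment that forces every matching square system to continue growing.
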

\begin{proof}
An example is illustrated in Figure~\ref{fig:counteregg}, where the
strength is illustrated by the number of parallel edges. The system
is under temperature $\temper=2$ and the top-left tile is the seed.
Let $S$ be any square tile assembly system that generate a
super-tile as in the example. Then $S$ will continue growing by
sticking a tile on the left bottom corner to the right top corner.
Therefore, the super-tile in the example cannot be produced by
square tile assembly system. \qedhere

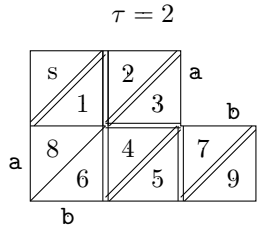
\begin{figure}
\centering
  \begin{picture}(30,20)
  \gasset{AHnb=0, Nw=0, Nh=0, Nframe=no}
  \drawpolygon(0,0)(30,0)(30,10)(20,10)(20,20)(0,20)
  \drawline(0,10.5)(9.5,20)\drawline(0.5,10)(10,19.5)\drawline(0,10)(10,10)
  \drawline(9.65,20)(9.65,0)\drawline(10.35,20)(10.35,0)
  \drawline(10,10.5)(19.5,20)\drawline(10.5,10)(20,19.5)
  \drawline(10,10.35)(20,10.35)\drawline(10,9.75)(20,9.75)
  \drawline(10,0.5)(19.5,10)\drawline(10.5,0)(20,9.5)
  \drawline(9.65,20)(9.65,10)\drawline(10.35,20)(10.35,10)
  \drawline(0,0)(10,10)\drawline(19.65,10)(19.65,0)\drawline(20.35,10)(20.35,0)
  \drawline(20,0.5)(29.5,10)\drawline(20.5,0)(30,9.5)
  \node()(3,17){s}\node()(15,25){$\temper=2$}\node()(7,13){1}\node()(13,17){2}\node()(17,13){3}
  \node()(13,7){4}\node()(17,3){5}\node()(7,3){6}\node()(23,7){7}\node()(3,7){8}\node()(27,3){9}
  \node()(-2,5){{\tt a}}\node()(5,-2){{\tt b}}
  \node()(22,17){{\tt a}}\node()(27,12){{\tt b}}
  \end{picture}
  \caption{A counter-examples that triangular tile assembly system cannot be simulated by
  square tile assembly system. Each glue, unless mentioned, is unique and thus the label is omitted.
  For convenience, the picture is rotated by $\pi/4$.}\label{fig:counteregg}
\end{figure}
\end{proof}

\begin{lemma}
For any right triangular tile assembly system $T$ under temperature
$\temper=1$, there is a square tile assembly system $S$ that can
produce the same shape with $\pi/4$ rotation.
\end{lemma}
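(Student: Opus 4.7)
The plan is to mirror the $\temper = 1$ half of the previous lemma's proof in reverse: instead of splitting each square tile into two right triangles, I would fuse each hypotenuse-sharing pair of triangles in the produce of $T$ into a single square tile of $S$. Let $P$ denote the produce of $T$. The claim is only non-trivial when the region covered by $P$ is, after a $\pi/4$ rotation, a union of unit squares, since otherwise no square tile system could produce its shape; under this assumption the triangles of $P$ partition into pairs sharing a hypotenuse (each either a w-tile with an e-tile or an n-tile with an s-tile), and I would fix one such partition once and for all.

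For each pair $(t,t')$, I would introduce a single square tile in $S$ whose four sides carry the glues of the four non-hypotenuse edges of $t$ and $t'$, each retagged with a unique label identifying the pair's position in $P$. The seed of $S$ is the square containing the seed triangle of $T$, the temperature is $\temper = 1$, and each new glue has strength $1$. To verify that $S$ produces the rotated image of $P$, I observe that because $P$ is connected, every pair has at least one non-hypotenuse edge shared with a triangle from a different pair; equivalently, each square in $S$'s intended produce shares at least one edge with a neighbour. At temperature $1$ a single matching edge suffices for attachment, so starting from the seed I can grow $S$'s super-tile square by square along these shared edges, and an induction on the order of attachment gives the claim.

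The step I expect to require the most care is ensuring that no square tile ever matches anywhere other than its intended location; this is exactly what the unique pair-identifiers are for, mirroring the unique-glue trick in Proposition~\ref{prop:t1}. The reason the same idea fails at $\temper \ge 2$, as witnessed by Lemma~\ref{lem:counteregg}, is that at $\temper = 1$ attachment is cooperation-free: a square tile can be placed as soon as any one of its edges matches, so we never face the situation in which two non-adjacent strong edges conspire to force the attachment of an unwanted tile.
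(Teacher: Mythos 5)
Your construction is genuinely different from the paper's, and it has a real gap. The paper does not look at the produce at all: it works at the level of tile \emph{types}, fusing every pair of triangular tiles $(\gamma_1,\gamma_2,\gamma_3,{\tt n}),(\gamma_1,\gamma_4,\gamma_5,{\tt s})$ (or the ${\tt e}/{\tt w}$ analogue) whose hypotenuses carry a common non-$\phi$ glue $\gamma_1$ into a single square tile that keeps the four leg glues $\gamma_2,\ldots,\gamma_5$ on its four sides. This yields a \emph{finite} square system $S$ whatever $T$'s produce looks like, and it preserves the glues on all non-hypotenuse edges, which is what the paper needs when it later compares systems not just by shape but by the glues on shared and border edges (the paragraph following this lemma, and the counterexample of Lemma~\ref{lem:counteregg}, are glue-sensitive). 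Your per-position, unique-glue construction is essentially Proposition~\ref{prop:t1} applied to the shape of one produce of $T$; as your own closing remark betrays, it never uses the hypothesis $\temper=1$ in an essential way and would "work" at any temperature, which should have been a warning sign given Lemma~\ref{lem:counteregg}.

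The concrete failure is finiteness (and well-definedness) of the produce. A tile assembly system must have a finite tile set, but at $\temper=1$ a right triangular system can easily have an infinite produce (a single tile type whose opposite legs carry the same strength-$1$ glue already tiles an infinite strip), and a non-deterministic system can have many produces of different shapes, so "the produce $P$" that your construction hard-codes need be neither finite nor unique. For an infinite $P$ your $S$ would require infinitely many uniquely-tagged square tiles, so the construction does not produce a valid tile assembly system at all. The parts of your argument that do go through — that a compatible produce partitions uniquely into hypotenuse-sharing pairs, and that connectivity of the super-tile forces each such pair to meet a neighbouring pair along a leg — are correct and are implicitly what makes the paper's type-level fusion work, but they do not rescue the per-position encoding.
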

\begin{proof}
For $\temper=1$, we construct a square tile assembly system $S$ from
$T$ as follows: for every pair of right triangular tiles
$(\gamma_1,\gamma_2,\gamma_3,{\tt
n}),(\gamma_1,\gamma_4,\gamma_5,{\tt s})$ or
$(\gamma_1,\gamma_2,\gamma_3,{\tt
e}),(\gamma_1,\gamma_4,\gamma_5,{\tt w})$ in $T$, where
$\gamma_1\neq\phi$, we add a new square tile with glues
$\gamma_4,\gamma_5,\gamma_2,\gamma_3$ or
$\gamma_5,\gamma_2,\gamma_3,\gamma_4$ (on east, north, west, south
sides, respectively) to $S$. Then the new square tile assembly
system produce the same shape with $\pi/4$ rotation.
\end{proof}

To compare the produces of two tile assembly system, we not only
compare the shape of the produce, but also compare the glues on
shared common edges, including both those on border and those inside
the super-tiles, with possible affine transformation on the shape,
which includes rotation, scaling, shift, and their compositions. We
call the power of produce certain super-tiles the \emph{shape
complexity} and say one system has greater power than another system
if every system produce of the former type with compatible shape is
the produce of some system of the latter type.

For every equilateral triangular tile assembly system $T$, there is
a right triangular tile assembly system $\mathcal{F}(T)$ such that
the produces of two system is equivalent up to an affine
transformation. There are two more types of tiles in right
triangular tile assembly system, which cannot be simulated by
equilateral triangular tiles. So we can say the shape complexity of
equilateral triangular tile assembly system is less than that of
right triangular tile assembly system.

The example given in Lemma~\ref{lem:counteregg} is a flattened
equilateral triangular tile assembly system. In other words, there
exists equilateral triangular tile assembly system which cannot be
produced by square tile assembly system even under affine
transformations. By Lemma~\ref{lem:countereg} and
Lemma~\ref{lem:counteregg}, we have the follow theorem.

\begin{theorem}
The square tile assembly systems and the triangular tile assembly
systems are not comparable in the sense of shape complexity.
\end{theorem}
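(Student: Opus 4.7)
The plan is to obtain the theorem as a direct consequence of the two counter-example lemmas just proved, one for each direction of non-comparability. Since shape complexity compares the two families by asking whether every produce of one can be matched, up to an affine transformation, by some system of the other, it suffices to exhibit one witness in each direction.

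For the direction ``the triangular tile assembly systems are not dominated by the square ones'', I would cite Lemma~\ref{lem:counteregg} and the paragraph immediately following it. That lemma gives a right triangular system $T$ whose produce no square system can reproduce up to $\pi/4$ rotation, and the remark just before the theorem notes that the tile set used is in fact the flattening $\mathcal{F}(T_0)$ of an equilateral triangular system $T_0$; thus the same obstruction works against square simulation of both the right and the equilateral variants.

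For the direction ``the square tile assembly systems are not dominated by the triangular ones'', I would invoke Lemma~\ref{lem:countereg}. It produces a square system $S$ such that no division of $S$ yields the same shape after a $\pi/4$ rotation. To pass from ``no division works'' to ``no triangular system works'', I would argue that any right (or, via $\mathcal{F}$, equilateral) triangular super-tile which, up to an affine transformation, coincides with the produce of $S$ must assign exactly two triangular tiles to each square cell of $S$; reading off the glues on the shared and external edges of each such pair yields a candidate division of $S$, so the impossibility already established for divisions carries over to all triangular systems.

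The main obstacle is precisely this last bridging step, since Lemma~\ref{lem:countereg} is phrased in the narrower ``division'' framework while the theorem uses the broader shape-complexity notion with affine transformations. I would handle it by a straightforward tile-by-tile inheritance argument on glues, checking that the resulting pairing satisfies the strength and matching axioms required of a division, so that the finitely many candidate divisions ruled out in the proof of Lemma~\ref{lem:countereg} exhaust all triangular systems capable of producing the target shape. Once this bridge is in place, combining the two witnesses gives the non-comparability claim.
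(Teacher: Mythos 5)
Your proposal is correct and takes essentially the same route as the paper, which derives the theorem directly by citing Lemma~\ref{lem:countereg} for one direction and Lemma~\ref{lem:counteregg} (together with the flattening remark) for the other. The bridging argument you supply --- passing from ``no division of $S$ works'' to ``no triangular system works'' by pairing the two triangles in each square cell and inheriting their glues --- is exactly the step the paper leaves implicit, so your write-up is, if anything, more complete than the original.
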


\section{Assembly of Triangles}
Without loss of generality, we consider the assembly of an upright
full triangle. For downward full triangle, one can simply define a
new triangular tile system by flip-over each tiles in the original
triangular tile system.

\begin{proposition}
For temperature $\temper=1$ the minimal number of tiles to assemble
a full triangle with shortest edge of length $N$ is $N^2$, including
$N(N+1)/2$ upright triangular tiles and $N(N-1)/2$ downright
triangular tiles.
\end{proposition}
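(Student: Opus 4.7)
The plan is to establish matching upper and lower bounds of $N^2$. For the upper bound, I would give an explicit construction: label each of the $N(N+1)/2$ upright positions and $N(N-1)/2$ downward positions of the target triangle with a distinct identifier, and for every pair of positions that share an edge in the triangular tiling introduce a fresh glue $\gamma$ with $g(\gamma,\gamma)=1$ placed on both sides of that shared edge. All remaining external sides carry the noninteractive glue $\phi$. Taking the apex tile as seed, one verifies that at $\temper=1$ each tile attaches uniquely to its designated neighbor and no further growth is possible past the boundary, so the produce is exactly the full triangle. This uses exactly $N(N+1)/2$ upright and $N(N-1)/2$ downward tile types.

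For the lower bound, I would adapt the pumping argument of Rothemund and Winfree for $N\times N$ squares at $\temper=1$. Suppose a deterministic triangular tile assembly system $S=(T,s,g,1)$ with $|T|<N^2$ produces the full triangle. Since the produce has $N^2$ positions and upright tiles cannot substitute for downward tiles (and vice versa), the pigeonhole principle applied within at least one orientation class yields two distinct positions $p_1\ne p_2$ of the same orientation that host tiles of the same type $t$. Fix any assembly sequence $s=st_0,st_1,\ldots,st_k$; at $\temper=1$ each step attaches one tile via a single matching edge, so the attachment relation is a rooted tree $\mathcal{T}$ rooted at $s$, and we may form the subtree $\mathcal{T}_{p_1}$ consisting of $p_1$ together with its descendants under the attachment order.

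The key observation is that because $t$ has the same three glues at both $p_1$ and $p_2$, every attachment used to grow $\mathcal{T}_{p_1}$ out of $p_1$ is equally admissible when grown out of $p_2$ in the same relative directions, i.e.\ by translating the tree by the vector $p_2-p_1$. I would select the pair $(p_1,p_2)$ extremally: among all pairs sharing a tile type in the produce, take one whose translation vector has smallest magnitude, and then choose $p_1$ so that its subtree $\mathcal{T}_{p_1}$ has a tile lying farthest in the $p_2-p_1$ direction. The translated copy of $\mathcal{T}_{p_1}$ anchored at $p_2$ then either (a) places a tile strictly outside the triangular region, contradicting the assumption that the produce is exactly the full triangle, or (b) attempts to place a tile at a position already occupied by a tile of a different type, meaning a tile of $T$ can attach at an exposed edge of the produce in a way that changes the super-tile --- contradicting the determinism of $S$ and the termination at the given produce.

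The main obstacle is the verification that some translated copy genuinely exits the triangle or conflicts with an existing tile; the extremal/minimality choice of $(p_1,p_2)$ handles this, together with a case analysis on whether $p_1$ or $p_2$ lies closer to the boundary in the chosen direction. Once this is settled, running the argument separately inside the upright orientation class and inside the downward orientation class forces at least $N(N+1)/2$ distinct upright types and at least $N(N-1)/2$ distinct downward types, combining with the construction above to give the exact count $N^2$.
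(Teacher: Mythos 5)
Your proposal is correct and takes essentially the same route as the paper: a unique-glue construction giving the $N^2$ upper bound, and for the lower bound a pigeonhole argument on repeated tile types followed by a temperature-$1$ pumping step that contradicts determinism (with the per-orientation pigeonhole giving the stated $N(N+1)/2$ and $N(N-1)/2$ counts). The only difference is mechanical: the paper pumps by re-appending the tile path from $t_1$ to $t_2$ indefinitely to force an infinite produce, whereas you translate the attachment subtree of $p_1$ to $p_2$ once and use an extremal choice of the pair to force an exit or a conflict --- both resolve the same collision issue and yield the same contradiction.
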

\begin{proof}
Without loss of generality, we consider the assembly of equilateral
triangles by equilateral triangular tiles. For the case of right
triangles, we can treat it as a flattened equilateral triangles.

By Proposition~\ref{prop:t1}, there exists a system of $N^2$ tiles
to assemble the required full triangle. To show that it is optimal,
suppose there is a system $(T,s,g,1)$ with less tiles. Then by the
pigeon hole principle, there are two tiles $t_1,t_2$ in the produce
that are the same. Since the produce is a full triangle, there is a
non-crossing path of tiles from $s$ to $t_1$ and from $t_1$ to
$t_2$, respectively. Since the temperature $\temper=1$, there is a
possible assembly process that starts from $s$ and sticks each tile
along the path from $s$ to $t_1$ and then from $t_1$ to $t_2$. After
$t_2$ sticks to the super-tile, again sticks each tile along the
path from $t_1$ to $t_2$ ($t_1$ itself is not included). So the
produce of the system is a infinite structure. Since the system is
deterministic, the produce cannot be a triangle, which contradicts
the assumption. So the system with $N^2$ tiles is optimal.
\end{proof}

Now we consider the temperature $\temper\geq2$. First we show how to
use $2N-1$ triangular tiles, including $N+1$ upright tiles and $N$
downright tiles assemble a full triangular.

\begin{proposition}
For temperature $\temper=2$ there is a triangular tile assembly
system of $2N-1$ tiles that produces a full triangular with shortest
edge of length $N$.
\end{proposition}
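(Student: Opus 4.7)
The plan is to give an explicit construction: for each $N$, I will exhibit a set of $2N-1$ equilateral triangular tile types that deterministically self-assemble into a full upward equilateral triangle of side~$N$ at temperature $\temper=2$. Since the target region contains $N^2$ unit triangles ($N(N+1)/2$ upward and $N(N-1)/2$ downward), most positions must be filled by a small number of shared tile types, with the size parameter $N$ encoded only along a one-dimensional ``spine'' of the construction.

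My approach is to encode the row index in a unary counter running along the left slant of the triangle. The tile set will consist of a single apex seed (also playing the role of the unique tile of row~$1$), together with, for each $k = 2, \dots, N$, a row-specific upward ``left-corner'' tile and a row-specific downward ``propagator'' tile; a shared filler type handles the interior of every row uniformly. The row-$k$ left-corner tile attaches beneath the row-$(k-1)$ left-corner tile via a strength-$2$ glue on their single shared slanted edge, so the left boundary lengthens by exactly one row per step. Once a row-$k$ left corner is in place, the interior of row $k$ fills left-to-right by cooperation at temperature~$2$: each interior triangle attaches when its upper neighbor in row $k-1$ and its lateral neighbor in row $k$ both contribute strength-$1$ glues. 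A row-specific right-slant glue, carried across row $k$ by the propagator tile, forces the rightmost position of row $k$ to be the intended row-$k$ right-corner tile, so the row halts at width $2k-1$.

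The step I expect to require the most care is the tight accounting to reach exactly $2N-1$ tile types rather than the straightforward $2N+1$: this requires identifying the apex seed with both row-$1$ corner roles (which coincide, since row~$1$ has a single tile), and sharing the interior filler with the row-$k$ right-corner role via careful assignment of row-specific glues. Termination of the triangle after row $N$ is enforced by assigning $\phi$ to the relevant bottom glues of the row-$N$ tiles, so that no row $N+1$ can nucleate. Once the tile set and glue function are laid out, the correctness and determinism of the final shape follow by an induction on $k$: after the $k$-th phase the produced super-tile is precisely the top $k$ rows of the target triangle, and the final super-tile after phase $N$ is the full triangle of side $N$.
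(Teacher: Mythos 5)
Your overall strategy---$O(1)$ shared filler tiles plus a length-$N$ unary counter of row-unique tiles along one edge, with temperature-$2$ cooperation filling each row and capping its width---is the same idea as the paper's construction. The paper, however, runs the counter along the \emph{base}: the $2N-1$ bottom-row positions each get a unique tile ($U_i$ with strength-$2$ glue $i$ on both slants, $D_i$ with glues $i$ and $i+1$), and then just two generic fillers build every higher row, each upward filler attaching by a single strength-$2$ horizontal bond to the downward tile beneath it and each downward filler attaching by cooperation of its two strength-$1$ slant bonds. Because the rows shrink toward the apex, termination is automatic and no ``stop'' signal is ever needed. Your apex-down orientation forces you to detect row $N$ explicitly, which is where your write-up runs into trouble.

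There are three concrete geometric errors in your mechanism. First, the leftmost upward tiles of rows $k-1$ and $k$ do \emph{not} share an edge---they meet only at a vertex---so the row-$k$ left-corner tile cannot attach ``beneath the row-$(k-1)$ left-corner tile via a strength-$2$ glue on their single shared slanted edge.'' The spine must zigzag: the row-$k$ downward propagator attaches to the \emph{horizontal} bottom edge of the row-$(k-1)$ left corner, and only then does the row-$k$ left corner attach to that propagator's slant. Second, in an upward-pointing triangle only the \emph{downward} unit triangles of row $k$ have a neighbor in row $k-1$; the interior upward tiles of row $k$ have no upper neighbor at all, so they cannot attach ``when the upper neighbor and the lateral neighbor both contribute strength-$1$ glues''---they must attach by a single strength-$2$ slant bond, and only the downward tiles attach cooperatively. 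Third, a ``row-specific right-slant glue carried across row $k$ by the propagator'' is incompatible with your interior filler being shared across all rows; fortunately it is also unnecessary, since the cooperative rule for downward tiles (which requires a row-$(k-1)$ tile overhead) already caps row $k$ at width $2k-1$. Finally, on the count: your construction as described uses $1+2(N-1)+2=2N+1$ tile types, and your proposed merging of corner roles is not actually carried out; you should not be too troubled, since the paper's own figure also yields $2N+1$ (indeed its text says ``$N+1$ upright and $N$ downright'' tiles, which is $2N+1$, not the stated $2N-1$). The asymptotic claim $O(N)$ is what matters, but as written your proof sketch asserts attachments that the geometry does not permit.
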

\begin{proof}
The system is illustrated in Figure~\ref{fig:linear}. The
construction here works for both equilateral triangular tiles and
right triangular tiles. \qedhere

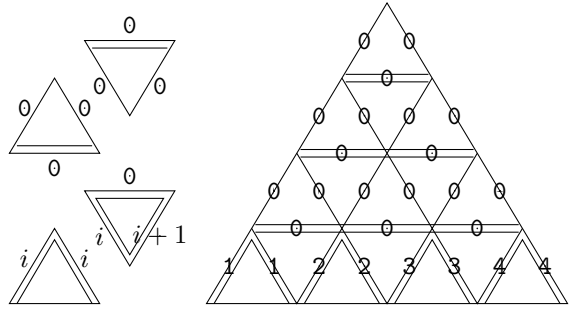
\begin{figure}
\centering
  \begin{picture}(25,35)
  \gasset{AHnb=0, Nw=0, Nh=0, Nframe=no}
  \drawpolygon(0,0)(12,0)(6,10)\drawline(1,0)(6,8.5)(11,0)\node()(2,6){$i$}\node()(10,6){$i$}
  \drawpolygon(22,15)(10,15)(16,5)\drawpolygon(20.5,14)(11.5,14)(16,6.5)\node()(12,9){$i$}\node()(20,9){$i+1$}\node()(16,17){$\tt0$}
  \drawpolygon(0,20)(12,20)(6,30)\drawline(1,21)(11,21)\node()(2,26){$\tt0$}\node()(10,26){$\tt0$}\node()(6,18){$\tt0$}
  \drawpolygon(22,35)(10,35)(16,25)\drawline(11,34)(21,34)\node()(12,29){$\tt0$}\node()(20,29){$\tt0$}\node()(16,37){$\tt0$}
  \end{picture}
  \begin{picture}(50,40)
  \gasset{AHnb=0, Nw=0, Nh=0, Nframe=no}
  \drawpolygon(0,0)(48,0)(24,40)
  \drawline(12,20)(18,10)(30,30)
  \drawline(18,30)(30,10)(36,20)
  \drawline(18.5,29.5)(29.5,29.5)\drawline(18.5,30.5)(29.5,30.5)
  \drawline(12.5,19,5)(35.5,19.5)\drawline(12.5,20.5)(35.5,20.5)
  \drawline(6.5,9.5)(41.5,9.5)\drawline(6.5,10.5)(41.5,10.5)
  \drawline(1,0)(6,8.5)(11,0)\drawline(13,0)(18,8.5)(23,0)
  \drawline(25,0)(30,8.5)(35,0)\drawline(37,0)(42,8.5)(47,0)
  \drawline(6,10)(12,0)(18,10)(24,0)(30,10)(36,0)(42,10)
  \node()(3,5){{\tt1}}\node()(9,5){{\tt1}}
  \node()(15,5){{\tt2}}\node()(21,5){{\tt2}}
  \node()(27,5){{\tt3}}\node()(33,5){{\tt3}}
  \node()(39,5){{\tt4}}\node()(45,5){{\tt4}}
  \node()(12,10){{\tt0}}\node()(24,10){{\tt0}}\node()(36,10){{\tt0}}
  \node()(9,15){{\tt0}}\node()(15,15){{\tt0}}\node()(21,15){{\tt0}}\node()(27,15){{\tt0}}\node()(33,15){{\tt0}}\node()(39,15){{\tt0}}
  \node()(18,20){{\tt0}}\node()(30,20){{\tt0}}
  \node()(15,25){{\tt0}}\node()(21,25){{\tt0}}\node()(27,25){{\tt0}}\node()(33,25){{\tt0}}
  \node()(24,30){{\tt0}}
  \node()(21,35){{\tt0}}\node()(27,35){{\tt0}}
  \end{picture}
  \caption{A triangular tile assembly system of $2N-1$ tiles produces a full triangle,
  where ${\tt1}\leq i\leq N$. On the right is an example for
  $N=4$}\label{fig:linear}
\end{figure}
\end{proof}

Using the same technique of square tile assembly for $N\times N$
squares~\cite{Rothemund&Winfree2000}, the following result follows.

\begin{proposition}
There is a right triangular tile assembly system of $O(\log N)$
tiles that produces a full right triangular with shortest edge of
length $N$.
\end{proposition}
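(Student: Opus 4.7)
The plan is to adapt the Rothemund--Winfree $O(\log N)$ square construction (at $\temper = 2$) to produce a right triangular region. The starting observation is that every unit square in the plane can be cut along one of its diagonals into two right triangular tiles --- one with its right angle at the SW corner and one with its right angle at the NE corner, sharing a NE--SW hypotenuse. Consequently, any square tile assembly system with $k$ tile types can be simulated by a right triangular tile assembly system with at most $2k$ tile types, by replacing each square tile $(\gamma_1,\gamma_2,\gamma_3,\gamma_4)$ by the pair $(\gamma^*,\gamma_1,\gamma_2,\mathtt{n})$ and $(\gamma^*,\gamma_3,\gamma_4,\mathtt{s})$, where $\gamma^*$ is a fresh glue of strength $\geq \temper$ unique to that pair. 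Determinism of the original square system is inherited.

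First I would invoke the Rothemund--Winfree construction to obtain a square tile assembly system of $O(\log N)$ tiles that produces a full $N\times N$ square at temperature $\temper=2$, whose structure is an L-shaped seed along two sides encoding $N$ in binary (the $O(\log N)$ contribution) together with a constant-size set of binary counter rule tiles that fills the interior. Applying the simulation above, I obtain a right triangular tile assembly system of $O(\log N)$ tiles that produces the same $N\times N$ square (now tessellated by pairs of right triangular tiles).

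Next I would restrict growth to a right triangular half of this square: view the desired right triangle as the set of unit cells $(i,j)$ with $i+j \leq N-1$, cut by the anti-diagonal from the top-left to the bottom-right corner of the square. For squares with $i+j \leq N-2$ the existing pair of right triangular tiles is used unchanged. For each of the $N$ squares on the anti-diagonal I retain only the SW right triangular tile, modifying its hypotenuse glue to the inert $\phi$ so that no NE right triangle can attach across the boundary. Because the original interior rule tiles come from a constant-size family, only $O(1)$ new ``diagonal'' tile variants are needed, so the tile set still has size $O(\log N)$.

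The main obstacle is checking that this modification remains deterministic and truly terminates at the hypotenuse. Determinism follows because the Rothemund--Winfree counter assigns a unique glue signature to every position, so the correct interior tile or diagonal variant is the unique one that can attach; the $\phi$ glue on the hypotenuse of each diagonal tile prevents any right triangle from sticking on the far side of the anti-diagonal, so the produced super-tile is exactly the desired right triangle with shortest leg of length $N$.
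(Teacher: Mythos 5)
There is a genuine gap, and it is in the truncation step. Your plan relies on the claim that the Rothemund--Winfree construction ``assigns a unique glue signature to every position,'' so that $O(1)$ ``diagonal'' variants with a $\phi$ hypotenuse can be confined to the anti-diagonal. That claim is only true inside the $O(\log N)\times N$ counter rectangle; the remaining $\Theta(N^2)$ cells of the square---through which most of the anti-diagonal passes---are filled by a constant family of position-oblivious filler tiles, so a cell with $i+j=N-1$ presents exactly the same south and west glues as an interior filler cell. Consequently your diagonal variant $t'$ of a filler tile $t$ can attach wherever the ordinary SW half of $t$ can, and vice versa: if $t'$ lands at an interior cell, its $\phi$ hypotenuse prevents the NE half from ever attaching and growth to the north and east of that cell stalls; if the ordinary pair lands on an anti-diagonal cell, the assembly grows past the hypotenuse. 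The system is therefore nondeterministic and does not produce the triangle, and no choice of $O(1)$ extra tile types can repair this, because stopping exactly on the line $i+j=N-1$ requires the assembly to measure a distance of $N$ along that line. A secondary unsupported step: your opening claim that \emph{any} square system with $k$ tile types is simulated by a division with $2k$ triangular types is false in general (Lemma~\ref{lem:countereg} of this paper is a counterexample); it does hold for the hole-free $\temper=2$ Rothemund--Winfree assembly, but only after checking that every cooperative attachment uses two edges lying on the same side of the chosen diagonal cut---exactly the ``input-split'' problem discussed in Section~3.

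The paper's proof avoids all of this by building the triangle natively: the binary counter (a seed row of length $n=\lceil\log N\rceil$ together with $2n+O(1)$ tiles in total) is oriented so that the $(n-1)\times(N-n+1)$ rectangle it produces runs along, and thereby measures out, the side of length $N$; the constant filler tiles then complete a region whose boundary is already delimited by the seed row, the counter's termination, and inert glues, so no position-oblivious tile ever has to decide whether it lies on the hypotenuse. If you want to salvage your square-first approach, you would have to re-route the counter along the anti-diagonal so that its overflow marks the hypotenuse---at which point you have essentially reconstructed the construction of Figure~\ref{fig:log}.
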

\begin{proof}
The idea is that using a seed row of length $n=\lceil \log N\rceil$
to construct a $(n-1)\times(N-n+1)$ rectangle super-tile by counting
from $(2^{n}-N+n+2)/2$ to $2^{n-1}$ with duplicate copies. Then the
rectangle is completed by filling tiles to make a right triangle.
The temperature is $\temper=2$ and tiles is illustrated in
Figure~\ref{fig:log}, where $(s1,\phi,{\tt l},{\tt w})$ is the seed.
There are in total $2n+37$ tiles. \qedhere

\begin{figure}
\centering
  \rtilewdss{s1}{}{{\tt g}}\rtileedds{s1}{s2}{{\tt[1}}
  \rtilewddd{s2}{s2}{{\tt l}}\rtileedds{s2}{s3}{{\tt 0}}
  \rtilensdd{}{{\tt 0]i}}{s3}
  \rtilewdss{{\tt g}}{{\tt g}}{{\tt g}}\rtileedss{{\tt g}}{{\tt g}}{{\tt g}}

  \rtilesddd{{\tt 1]}}{{\tt 0]i}}{{\tt r}}
  \rtilesdds{{\tt 1]}}{{\tt 0]}}{{\tt g}}\rtilendss{{\tt 1]}}{{\tt 1]}}{{\tt n}}
  \rtilesdds{{\tt 0]}}{{\tt 1]}}{{\tt g}}\rtilendss{{\tt 0]}}{{\tt 0]}}{{\tt c}}

  \rtilesdss{{a}}{{a}}{{\tt n}}\rtilendss{{a}}{{a}}{{\tt n}}
  \rtilesdss{{\tt 1}}{{\tt 0}}{{\tt c}}\rtilendss{{\tt 1}}{{\tt 1}}{{\tt n}}
  \rtilesdss{{\tt 0'}}{{\tt 1}}{{\tt c}}\rtilendss{{\tt 0'}}{{\tt 0}}{{\tt c}}

  \rtilesdss{{\tt [}a}{{\tt [}a}{{\tt n}}\rtilendds{{\tt [}a}{{\tt [}a}{}
  \rtilesdss{{\tt [1}}{{\tt [0}}{{\tt c}}
  \rtilesdss{{\tt f[1}}{{\tt [1}}{{\tt c}}\rtilendss{{\tt f[1}}{}{}

  \rtilewdsd{{\tt [}a}{}{{\tt [}a}\rtileedss{{\tt [}a}{{\tt x}}{{\tt [}a}
  \rtilewdss{{a}}{{\tt x}}{{a}}\rtileedss{{a}}{{\tt x}}{{a}}
  \rtilewdss{{\tt a]}}{{\tt x}}{{\tt a]}}\rtileedsd{{\tt a]}}{{\tt g}}{{\tt a]}}

  \rtilensds{}{{\tt l}}{{\tt l}}
  \rtileedss{{\tt l}}{{\tt l}}{{\tt g}}
  \rtilewdsd{{\tt l}}{{\tt g}}{{\tt l}}
  \rtilenssd{}{{\tt r}}{{\tt r}}
  \rtilewdss{{\tt r}}{{\tt g}}{{\tt r}}
  \rtileedds{{\tt r}}{{\tt r}}{{\tt g}}

  \begin{picture}(100,50)
  \gasset{AHnb=0, Nw=0, Nh=0, Nframe=no}
  \drawpolygon(0,0)(100,0)(50,50)
  \drawline(10,0)(55,45)\drawline(20,0)(60,40)\drawline(35,5)(65,35)\drawline(40,0)(70,30)\drawline(50,0)(75,25)
  \drawline(60,0)(80,20)\drawline(70,0)(85,15)\drawline(80,0)(90,10)\drawline(90,0)(95,5)
  \drawline(90,0)(45,45)\drawline(80,0)(40,40)\drawline(70,0)(35,35)\drawline(60,0)(30,30)\drawline(50,0)(25,25)
  \drawline(40,0)(20,20)\drawline(25,5)(15,15)\drawline(15,5)(10,10)
  \node()(12.5,10){s}\node()(17.5,12.5){{\tt[1}}\node()(22.5,7.5){{\tt0}}\node()(27.5,2.5){{\tt0]i}}
  \drawline(9.5,0)(5,4.5)\drawline(10.5,0)(5.5,5)
  \drawline(9.65,9.65)(9.65,0)\drawline(10.35,9.65)(10.35,0)
  \drawline(19.5,0)(15,4.5)\drawline(20.5,0)(15.5,5)
  \drawline(24.5,5)(29.5,0)(34.5,5)\drawline(25.5,5)(30.5,0)(35.5,5)
  \drawline(34.5,14.5)(39.5,9.5)(39.5,0)\drawline(40,10)(40,0)
  \drawline(44.65,15)(44.65,5)\drawline(45.35,15)(45.35,5)
  \drawline(44.5,24.5)(49.5,19.5)(49.5,0)\drawline(50,20)(50,0)
  \drawline(54.65,25)(54.65,5)\drawline(55.35,25)(55.35,5)
  \drawline(54.5,34.5)(59.5,29.5)(59.5,0)\drawline(60,30)(60,0)
  \drawline(64.65,35)(64.65,5)\drawline(65.35,35)(65.35,5)
  \drawline(69.65,30)(69.65,0)\drawline(70.35,30)(70.35,0)
  \drawline(74.65,25)(74.65,5)\drawline(75.35,25)(75.35,5)
  \drawline(79.65,20)(79.65,0)\drawline(80.35,20)(80.35,0)
  \drawline(84.65,15)(84.65,5)\drawline(85.35,15)(85.35,5)
  \drawline(89.65,10)(89.65,0)\drawline(90.35,10)(90.35,0)
  \drawline(15,5)(15,15)(25,15)(25,25)(35,25)(35,35)(45,35)(45,45)(55,45)
  \drawline(14.3,5.3)(14.3,14.7)\drawline(15.7,15.7)(24.3,15.7)(24.3,24.3)
  \drawline(25.7,25.7)(34.3,25.7)(34.3,34.3)\drawline(35.7,35.7)(44.3,35.7)(44.3,44.3)
  \drawline(45.7,45.7)(54.3,45.7)
  \drawline(20,0)(20,10)(30,10)(30,20)(40,20)(40,30)(50,30)(50,40)(60,40)
  \drawline(19.3,0.3)(19.3,9.7)\drawline(20.7,10.7)(29.3,10.7)(29.3,19.3)
  \drawline(30.7,20.7)(39.3,20.7)(39.3,29.3)\drawline(40.7,30.7)(49.3,30.7)(49.3,39.3)
  \drawline(50.7,40.7)(59.3,40.7)
  \drawline(25,5)(35,5)(35,15)(45,15)(45,25)(55,25)(55,35)(65,35)
  \drawline(25.7,5.7)(34.3,5.7)(34.3,14.3)\drawline(35.7,15.7)(44.3,15.7)(44.3,24.3)
  \drawline(45.7,25.7)(54.3,25.7)(54.3,34.3)\drawline(55.7,35.7)(64.3,35.7)
  \drawline(19.5,19.5)(24.5,14.5)\drawline(29.5,29.5)(34.5,24.5)\drawline(39.5,39.5)(44.5,34.5)
  \drawline(20,9)(15,4)\drawline(24.5,5.5)(19.5,0.5)
  \drawline(41,0)(45.5,4.5)\drawline(51,0)(55.5,4.5)\drawline(61,0)(65.5,4.5)
  \drawline(71,0)(75.5,4.5)\drawline(81,0)(85.5,4.5)\drawline(91,0)(95.5,4.5)
  \end{picture}
  \caption{The $O(\log N)$ tiles, $a\in\set{{\tt 0,1}}$, and the produce for $N=10,n=4$.
  For simplicity, the label on the super-tile is omitted.}\label{fig:log}
\end{figure}
\end{proof}

Using the same technique of base conversion as appeared in the
square tile assembly~\cite{Adleman&Cheng&Goel&Huang2001} to count
the integer represented by tiles, the bound on the minimal number of
tiles required can be improved to $O(\log N/\log\log N)$, which is
optimal; the construction is under temperature $\temper=3$.

\begin{corollary}
There is a right triangular tile assembly system of $O(\log
N/\log\log N)$ tiles that produces a full right triangular with
shortest edge of length $N$.
\end{corollary}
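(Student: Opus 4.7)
The plan is to mimic the construction in the preceding proposition, replacing the binary seed row and the binary increment gadget by a base-$b$ increment gadget, and then to re-use the ``filling'' triangular tiles to complete the triangle above the rectangular counter region. Concretely, I would pick $n = \lceil \log N / \log b\rceil$ so that a counter ranging over $n$ base-$b$ digits can represent all integers up to $N$, and I would build the seed row so that the counter starts at the value $b^{n} - (N-n+1)$ and increments up to $b^{n}-1$; each increment of the base-$b$ counter lays one row of the $(n-1)\times(N-n+1)$ rectangle, exactly as in the $O(\log N)$ construction.

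Next, I would adapt the base-$b$ increment gadget from \cite{Adleman&Cheng&Goel&Huang2001} from square tiles to a pair of right triangular tiles per square cell, using the splitting trick already employed in the previous proposition: each square ``digit tile'' is cut along a diagonal into an upper and a lower right triangle, and the cooperation between the ``carry-in'' coming from below and the ``digit'' coming from the side is arranged so that the two triangles attach in the same order as their square counterpart does under temperature $\temper = 3$. Concretely, the tile set consists of $O(b)$ digit/carry tiles (accounting for each digit value and whether it carries or not), $O(1)$ ``left boundary'' and ``right boundary'' tiles to handle the most and least significant digit, a seed row of length $n = O(\log N / \log b)$ whose tiles directly encode the starting value, and the $O(1)$ filler tiles from the earlier $2N-1$ construction that complete the triangle above the rectangle. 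The total count is $O(b) + O(\log N / \log b) + O(1)$.

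Setting $b = \lceil \log N / \log \log N\rceil$ optimizes $b + \log N/\log b$ and yields $O(\log N/\log\log N)$ tiles, matching the square-tile bound. The only real obstacle is verifying that the splitting of each square counter cell into two right triangles still enforces deterministic cooperative binding at $\temper=3$: one must check that the glues on the diagonal edges (which are new glues introduced by the splitting) have strength high enough to propagate the local computation but not so high that a triangle can attach prematurely from a single neighbor. This is a local case analysis essentially identical to the verification of the $O(\log N)$ proposition, so I would handle it by importing, rather than re-proving, the cooperation argument, and restrict the novel checking to the extra diagonal glues introduced by the base-$b$ gadget. A matching lower bound of $\Omega(\log N/\log\log N)$ follows from standard Kolmogorov-complexity arguments on the number of distinct tile sets, which I would cite from \cite{Adleman&Cheng&Goel&Huang2001} to conclude optimality.
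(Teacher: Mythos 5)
Your proposal is correct and follows essentially the route the paper intends: the paper gives no explicit proof of this corollary, only a one-sentence appeal to the base-conversion counting technique of \cite{Adleman&Cheng&Goel&Huang2001} at temperature $\temper=3$, grafted onto the $O(\log N)$ construction of the preceding proposition. Your write-up simply fills in the details the paper omits (the choice $b\approx\log N/\log\log N$, the $O(b)+O(\log N/\log b)$ accounting, the diagonal splitting of each counter cell into two right triangles, and the standard lower bound), all of which is consistent with the paper's stated plan.
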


\section{Conclusion}
Square tile assembly system is discussed in the literatures widely.
In this paper, we studied the triangular tile assembly system. We
showed that the triangular tile assembly system is also Turning
universal. The halting problem can be reduced to the tiling full
plane problem.

Compared to the square tile system, the triangular tile system need
more tiles to assemble a large compatible structure due to the fact
that a triangular tile has less edges than a square tile. In
general, as we showed, the two type of assembly system is not
comparable in the shape complexity. More precisely, there exists a
square tile assembly system $S$ such that no division of $S$
produces the same shape with $\pi/4$ rotation; and there exists a
triangular tile assembly system $T$ such that no square tile
assembly system produces the same shape, which is compatible with
square tiles, with same border glues with $\pi/4$ rotation.

We also discussed the assembly of triangles and the number of tiles
required to assemble a triangle with minimal edge of length $N$ is
$O(\log N/\log\log N)$, which is of the same order as those of
square tiles. The techniques used here is from that of assembly of
squares.

The model we used in this paper is of fixed temperature, unit
growing (at each step, only a single tile stick to the super-tile),
and irreversible (once tiles stick together, they will not break in
the further). There are other possible choice of models. For
example, if we allow variable temperature and reversible process as
discuss on square tiles~\cite{Adleman&Cheng&Goel&Huang2001}, then in
exactly the same way to the assembly of squares, one can prove
without difficulty that $O(1)$ tiles is enough to assemble arbitrary
large compatible triangles; in that case the time sequence is of
length $O(\log N)$.

All the result presented in the paper is based on theoretical study.
It will be interesting to assemble a triangle structure using
triangular tiles in the laboratory.


\begin{thebibliography}{1}

\bibitem{Adleman&Cheng&Goel&Huang2001}
L.~Adleman, Q.~Cheng, A.~Goel, and {M.-D.} Huang.
\newblock Running time and program size for self-assembled.
\newblock In {\em Proc. 33rd Ann. {ACM} Symp. Theor. of Comp. (STOC '01)},
  pages 740--748, 2001.

\bibitem{Adleman1999}
L.~M. Adleman.
\newblock Toward a mathematical theory of self-assembly.
\newblock \\{\tt https://eprints.kfupm.edu.sa/72519/1/72519.pdf}, 1999.

\bibitem{Kao&Schewller2006}
{M.-Y.} Kao and R.~Schweller.
\newblock Reducing tile complexity for self-assembly through temperature
  programming.
\newblock In {\em Proc. 7th Ann. {ACM-SIAM} Symp. Discrete Algorithm}, pages
  571--580, 2006.

\bibitem{Robinson1971}
R.~M. Robinson.
\newblock Undecidability and nonperiodicity for tilings of the plane.
\newblock {\em Inventiones Math.}, 12:177--209, 1971.

\bibitem{Rothemund&Winfree2000}
P.~W.~K. Rothemund and E.~Winfree.
\newblock The program-size complexity of self-assembled squares.
\newblock In {\em Proc. 32nd Ann. {ACM} Sympl Theor. of Comp. (STOC '00)},
  pages 459--468, 2000.

\bibitem{Wang1961}
H.~Wang.
\newblock Proving theorems by pattern recognition {II}.
\newblock {\em Bell System Technical Journal}, 40:1--42, 1961.

\bibitem{Winfree1996}
E.~Winfree.
\newblock On the computational power of {DNA} annealing and ligation.
\newblock In {\em {DNA} Based Computers: {DIMACS} Workshop}, pages 199--221,
  1996.

\bibitem{Winfree&Liu&Wenzler&Seeman1998}
E.~Winfree, F.~Liu, L.~A. Wenzler, and N.~C. Seeman.
\newblock Design and self-assembly of two-dimensional dna crystals.
\newblock {\em Nature}, 394:539--544, 1998.

\end{thebibliography}

\end{document}